
 \documentclass[12pt]{article}
\usepackage{amsmath,natbib,amssymb,amsthm}
\usepackage{bm} \usepackage{bbm}
\usepackage[normalem]{ulem}
\usepackage{booktabs}
\usepackage{array,multirow}
\usepackage{mathtools}
\usepackage{geometry}
\usepackage[utf8]{inputenc}
\usepackage[T1]{fontenc}

\newcommand{\ra}[1]{\renewcommand{\arraystretch}{#1}}

\usepackage{xr}
\externaldocument{mendellian_factorial_design_blind_split_supp}
\usepackage{titletoc}


\newcommand{\Exp}[2]{\mathbb{E}_{#2}\!\left[#1\right]}

\renewcommand{\Pr}[1]{\mathbb{P}\!\left(#1\right)}

\newcommand{\indep}{\perp\!\!\!\perp}
%


\newcommand{\bbR}{\mathbb{R}} \newcommand{\cC}{\mathcal{C}}

\newcommand{\z}{{\mathbf{z}}}

  \newcommand{\g}{{\mathbf{g}}}

\newcommand{\X}{{\mathbf{X}}} \newcommand{\Y}{{\mathbf{Y}}}

\newcommand{\1}{{\mathbf{1}}}

\newcommand{\cP}{{\mathcal{P}}}

\newcommand{\var}{\text{var}}

\newcommand{\cN}{{\cal N}}

\newcommand{\pnorm}[1]{{|\!|}#1{|\!|}} 


\usepackage{verbatim} 
    %
\usepackage{enumitem} 
\usepackage{appendix} \usepackage{pdfpages}
    %
\usepackage[usenames,dvipsnames]{pstricks} \usepackage{epsfig}
\usepackage{pst-grad} 

\theoremstyle{remark}
\newtheorem{theorem}{Theorem}

\newtheorem{algorithm}{Algorithm}

\newtheorem{definition}{Definition}
\newtheorem{assumption}{Assumption}

\newtheorem{proposition}[theorem]{Proposition}
\newtheorem{remark}{Remark}



\usepackage{fancyhdr}
\fancyhf{}


\newcommand{\blind}{1}


\geometry{top=1 in, bottom = 1 in, left = 1 in, right  = 1 in}

%


%
%



\begin{document}
\thispagestyle{empty}
\def\spacingset#1{\renewcommand{\baselinestretch}%
{#1}\small\normalsize} \spacingset{1}
 \if1\blind
 {
   \title{\bf Estimating Malaria Vaccine Efficacy in the Absence of a Gold Standard Case Definition: Mendelian Factorial Design}
   \author{Raiden B. Hasegawa\textsf{\footnote{\textit{Address for correspondence:} Department of
 Statistics, The Wharton School, University of Pennsylvania, Philadelphia, PA
 19104, (e-mail: \texttt{raiden@wharton.upenn.edu})}}
 \;and\; 
 Dylan S. Small\textsf{\footnote{Department of
 Statistics, The Wharton School, University of Pennsylvania, Philadelphia, PA
 19104, (e-mail: \texttt{dsmall@wharton.upenn.edu})}}}
 \maketitle
 } \fi

 \if0\blind
 {
   \bigskip
   \bigskip
   \bigskip
   \begin{center}
     {\LARGE\bf Estimating Malaria Vaccine Efficacy in the Absence of a Gold Standard Case Definition: Mendelian Factorial Design\par}
 \end{center}
   \medskip
 } \fi

\vspace{-0.5cm}
\begin{abstract}
Accurate estimates of malaria vaccine efficacy require a reliable definition of a malaria case. However, the symptoms of clinical malaria are unspecific, overlapping with other childhood illnesses. Additionally, children in endemic areas tolerate varying levels of parasitemia without symptoms. Together, this makes finding a gold-standard case definition challenging. We present a method to identify and estimate malaria vaccine efficacy that does not require an observable gold-standard case definition. Instead, we leverage genetic traits that are protective against malaria but not against other illnesses, e.g., the sickle cell trait, to identify vaccine efficacy in a randomized trial. Inspired by Mendelian randomization, we introduce Mendelian factorial design, a method that augments a randomized trial with genetic variation to produce a natural factorial experiment, which identifies vaccine efficacy under realistic assumptions. A robust, covariance adjusted estimation procedure is developed for estimating vaccine efficacy on the risk ratio and incidence ratio scales. Simulations suggest that our estimator has good performance whereas standard methods are systematically biased. We demonstrate that a combined estimator using both our proposed estimator and the standard approach yields significant improvements when the Mendelian factor is only weakly protective. 

\end{abstract}

\noindent
{\it Keywords:} Vaccine Efficacy; Causal Inference; Malaria; Mendelian Randomization; Sickle Cell; Factorial Design
\vfill

\newpage
\spacingset{1.45} 
\setcounter{page}{1}

\section{Introduction}
In 2017, there were an estimated 219 million cases of malaria of which 92\% occurred in Africa and an estimated 435,000 malaria related deaths of which 93\% occurred in Africa; nearly every case of malaria in Africa was cause by the parasite {\it Plasmodium falciparum} \citep{who2018}. Pregnant women and children under the age of 5 are the most vulnerable groups affected by malaria. To date, of more than 30 vaccines under development, the only vaccine to undergo a pivotal phase III trial is the pre-erythrocytic vaccine RTS,S which has shown to have limited efficacy ($30-50\%$ reduction in incidence rates; \cite{mahmoudi2017efficacy}). Consequently, the continued development of an efficacious P. {\it falciparum} malaria vaccine has the potential for substantial public health impacts. With so many potential vaccines in the development pipeline, a critically important statistical challenge is to develop methods for estimating vaccine efficacy.  To date, accurate estimation of vaccine efficacy against clinical outcomes attributable to P. {\it falciparum} malaria requires defining reliable case definitions, a task that is made difficult by the unspecific presentation of malaria in endemic areas. 

In the absence of a gold standard case definition, efficacy is usually assessed by choosing an inexact case definition which may falsely exclude cases attributable malaria and falsely include non-malaria cases. The established definition of clinical malaria in malaria prevention trials is the presence of a fever with a temperature $\ge37.5^o\text{C}$ and a P. {\it falciparum} parasite density above a certain threshold, e.g., 2500 or 5000 parasites per $\mu$l of blood \citep{TER_KUILE_2003, rts2011, Olotu_2013, bejon2013efficacy}. Case definitions of this form inevitably exclude some true cases and include some false cases due to heterogeneity in immunity and endemicity, fever killing effects, and parasite density measurement error.  With specificity $<1$, estimates of vaccine efficacy will be biased downward. It has been shown in simulations that such case classification errors have the potential to introduce substantial bias in many settings \citep{small2010evaluating}. 
Real trial data suggests that these challenges with the standard case definition often go unaddressed. In a multi-site pooled analysis of phase II RTS,S trial data using a fixed 2500 parasite per $\mu$l cutoff across all study sites, \cite{bejon2013efficacy} reports estimates of vaccine efficacy against malaria that were as high as $60\%$ in sites with low parasite prevalence and as low as $4\%$ in high parasite prevalence sites. The authors suggest a biological reason for this pattern: RTS,S prevents fevers in only a portion of mosquito bites and in areas where parasite prevalence is high, children are likely bitten more often by infected mosquitos. However this pattern is also consistent with bias due to the fixed case definition having lower specificity in higher prevalence areas where, due to improved immunity, children can carry higher parasite loads without fever. The standard case definition leaves much unanswered: is the heterogeneity in vaccine efficacy epidemiologically important or just an artifact of bias arising from an inexact case definition?

To overcome the challenges that accompany inexact case definitions, this paper introduces a new method for identifying malaria vaccine efficacy using natural genetic variation, which we call {\it Mendelian factorial design}.  Importantly, the method does not depend on an inexact case definition based on the parasite density, instead using all fevers (or deaths) with any level of parasitemia to estimate efficacy. To identify vaccine efficacy, the new method requires finding and recording genetic variants that provide specific protection against clinical malaria and operate through a different biological pathway than the vaccine being evaluated. A running example in this paper is the sickle cell trait, a hemoglobinopathy that has been shown to protect against malaria at the blood-stage (erythrocytic) of the infection and the RTS,S vaccine, which confers protection against malaria at the pre-erythrocytic, liver-stage of the infection. There are many other vaccine types, e.g., transmission blocking and erythrocytic vaccines, and genetic variants that might satisfy these identifying conditions \citep{ndila2018human}. We will discuss these occasionally throughout the paper and more thoroughly in the discussion.

In the following section, we introduce Mendelian factorial design informally in the context of the more familiar use of genetic variants as instrumental variables in {\it Mendelian randomization} studies, to which it has many parallels. In \S \ref{sec:robust}, we present a precise definition of vaccine efficacy and malaria-attributable fevers in a potential outcome framework and propose an identification strategy that holds under a few realistic assumptions. We then provide a simple $\sqrt{n}$-consistent and asymptotically normal covariate-adjusted estimator that is robust to model misspecification and assess its performance in a simulation study over a range of settings. We develop an improved ``bounded" estimator that combines the strengths of our estimator with that of the (biased) standard estimator. We demonstrate that it provides nearly uniform improvement over both estimators. In Appendix \ref{sec:ttffev} of the Supplemental Web Materials we extend the Mendelian factorial design to identify and estimate vaccine efficacy in time-to-first event studies.

\section{Mendelian Factorial Design: Parallels with Mendelian Randomization}\label{sec:MFD_intro}

When an observed association between a non-randomized exposure and an outcome may be confounded by an unobserved common cause, attributing the association to a causal effect may be misleading \citep{rosenbaum2002}. An instrumental variable (IV) is a covariate that is associated with the exposure but whose only association to the outcome is through a direct pathway to the exposure \citep{martens2006instrumental}. The IV takes the place of the physical randomization in a randomized trial, influencing only the assignment of subjects to exposure or control, setting up a natural experiment and providing an avenue for estimating the causal effect of the non-randomized exposure on the outcome \citep{rassen2009instrumental}. 
When genetic variants are used as instrumental variables, the corresponding collections of methods are often labeled as the {\it Mendelian randomization} (MR)  approach \citep{smith2008Mendelian}. For example, \cite{Kang_2013} proposed using the hemoglobin S variant (HbS) as an instrument to identify the causal effect of malaria on stunted growth. It is known that heterozygote carriers (HbAS, sickle cell trait) receive protection against clinical malaria whilst those without the variant (HbAA) do not. The first column of Table \ref{tab:parallelAssump} summarizes the assumptions required of HbAS status such that it can be used in a MR study to identify the causal effect of malaria on stunting. Assumption (1) requires that HbAS status in fact influences exposure; assumption (2) ensures that HbAS status can be treated ``as-if" it were randomized; and assumption (3) says that HbAS only effects stunted growth through its influence on exposure. Assumption (4) is required for point identification of the causal effect, although there are other alternative assumptions that can be used to point identify a causal effect such as monotonicity of the effect \citep{burgess2017review}.

Like MR, Mendelian factorial design (MFD) leverages genetic variation to identify a causal effect of interest -- in this instance, the efficacy of a vaccine, or roughly, the proportion reduction of disease-attributable outcomes in a population when the vaccine is applied \citep{lachenbruch1998sensitivity}. However, instead of addressing the bias that often accompanies a non-randomized exposure,  MFD attends to the bias in estimating treatment efficacy arising from an inexact case definition. The natural experiment arising from MR is used for causal inference in the absence of an actual randomized experiment, enabling the investigator to distinguish causal effects from unobserved confounding. The same natural experiment is employed in MFD to augment a two-arm randomized trial and create a simple $2\times 2$ factorial experiment. The factorial design allows the investigator to distinguish efficacy against disease-attributable outcomes even when the case definition incorrectly classifies a material number of non-disease outcomes as cases.

\begin{table}\caption{Parallel assumptions for Mendelian randomization and Mendelian factorial design.\label{tab:parallelAssump}  \vspace{0.5cm}}
  \centering 
  \begin{tabular}{rp{5cm}p{6cm}}
    Assumption & \textit{Mendelian Randomization} & \textbf{Mendelian Factorial Design} \vspace{.1cm}\\
    \hline \vspace{-.25cm}\\
    (1) & \textbullet\,HbAS associated with malaria & \textbullet\, HbAS has protective efficacy against malaria-attributable fever\vspace{.25cm} \\ \vspace{.25cm}
    (2) & \textbullet\, No unmeasured confounders associated with HbAS and stunted growth &  \textbullet\, HbAS ``as-if" randomized\\ \vspace{.25cm}
    (3) & \textbullet\, Only direct pathway from HbAS to stunted growth is through malaria & \textbullet\, Protection provided by HbAS is specific to malaria-attributable fever \\ 
    (4) & \textbullet\, HbAS does not modify the effect of malaria on stunting & \textbullet\,  No interaction effect between HbAS and vaccine (i.e., independent protective pathways)
  \end{tabular}
\end{table}

Although MR and MFD address different sources of bias in different settings, their designs share many parallel features. This is illustrated in Table \ref{tab:parallelAssump}, where each assumption of MR in the malaria stunting study corresponds to a closely related assumption that underlies a MFD study of vaccine efficacy against malaria-attributable fever. In more general terms, Assumption (1) says that the {\it Mendelian gene} or {\it Mendelian factor}, e.g., HbAS, is relevant. In the MR study this means that it is associated with the non-randomized exposure and in the MFD study this implies that the Mendelian gene is protective against the disease-attributable outcome. Assumption (2) requires that the Mendelian gene be unconfounded with the outcome of interest or be ``as-if" randomized, which implies the former. For the MR study, assumption (3) says that the Mendelian gene has no pleiotropic effects \citep{smith2008Mendelian}, that is, HbAS only effects stunted growth through its influence on malaria and not through its influence on another modifiable exposure or stunting itself. The parallel assumption for a MFD study is that the Mendelian factor protects against outcomes of any-cause only through reducing disease-attributable outcomes. Finally, assumption (4) says that the causal effect in the MR study does not vary over different levels of the Mendelian gene, or in other words, they do not interact. Similarly, the corresponding assumption in a MFD study is that the vaccine and Mendelian factor do not interact. In other words, the vaccine prevents the same proportion of disease-attributable outcomes at different levels of the Mendelian gene. For example, this assumption is plausible when a malaria vaccine and HbAS provide protection against malaria-attributable fevers through independent biological pathways.

\begin{figure}[ht]
\begin{center}
	\begin{tabular}{cc}
		{\it Mendelian Randomization} & {\bf Mendelian Factorial Design} \\
		\includegraphics[scale=0.75]{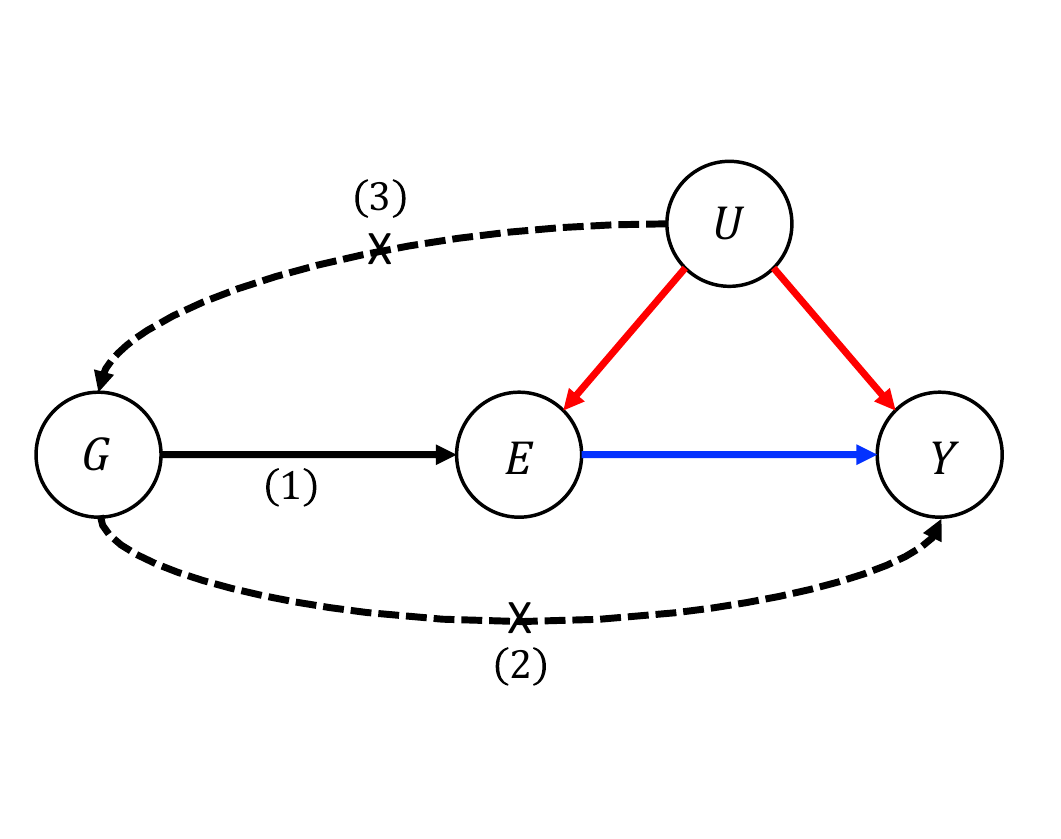} & \includegraphics[scale=0.75]{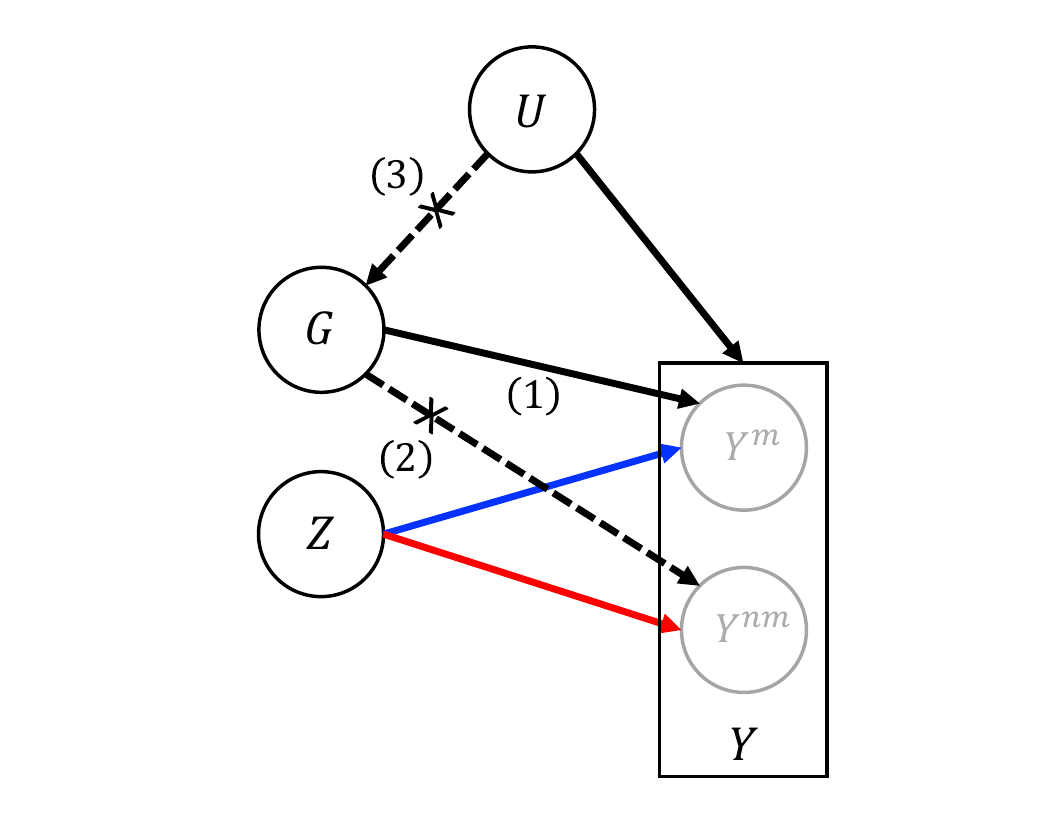}
	\end{tabular}
\caption[Causal diagrams for MR (left) and MFD (right).]{Causal diagrams for MR (left) and MFD (right). Blue arrows indicate the causal quantities of interest and red arrows indicate confounding addressed by each design. Gray variables $Y^m$ and $Y^{nm}$ are unobservable but $Y$, which doesn't distinguish between them is. Assumptions (1), (2) and (3) correspond to the similarly numbered assumptions in Table \ref{tab:parallelAssump}.}.
\label{fig:DAGs}
\end{center}
\end{figure}

Assumptions (1)-(3) in Table \ref{tab:parallelAssump} may be better understood encoded in a causal diagram. The causal diagram for MR and MFD are in the left and right panels of Figure \ref{fig:DAGs}, respectively. $G$ is a Mendelian gene (or factor), $U$ are unobserved confounders, $Z$ is a randomized vaccine, and $E$ is a non-randomized exposure. $Y$ is the outcome of interest -- in the MR study it is stunted growth and in the MFD study it is clinical malaria. $Y^m$ and $Y^{nm}$ are malaria-attributable and non-malaria fevers, respectively. We define these more precisely in \S \ref{sec:robust}. Finally, the blue arrows indicate the causal quantities of interest and the red arrows indicate the confounding addressed by each design. The causal diagram for MFD is a little bit unusual. $Y^m$ and $Y^{nm}$ are grayed out, emphasizing that we don't observed them, but instead only observe the outcome that is not distinguished by cause, $Y$. This is represented by the black bounding box. The MFD diagram hints at how the factorial structure and the absence of an arrow between $G$ and $Y^{nm}$ may be used to identify the arrow between $Z$ and $Y^m$, the vaccine efficacy.

\section{A Robust Framework for Estimating Vaccine Efficacy: Risk Ratios and Incidence Rate Ratios}\label{sec:robust}
\subsection{Notation: Observed Data}
Let $j = 1,\dots,J$ indicate sites in a multi-site randomized control trial (RCT) and $i=1,\dots,I_j$ indicate subjects at each center; then, $ij$ uniquely identifies each subject in the study. We let the total number of subjects in the trial be $n = \sum_j I_j$. For subject $ij$, let $Y_{ij}\in\{0,1\}$ be an observed fever or death with any parasitemia; let $Z_{ij} \in \{0,1\}$ indicate treatment/vaccine status and $G_{ij}\in\{0,1\}$ indicate sickle cell variant status ($G_{ij}=1$ if HbAS, $G_{ij}=0$ if HbAA); and let $X_{ij}\in \mathbb{R}^d$ be a $d$-dimensional vector of baseline characteristics. Because $G$ is assigned at conception, careful consideration of what variables constitute ``baseline" variables should be made. Let $D_{ij}$ and $U_{ij}$ indicate the malaria parasite density in the blood and the level of non-malaria infectious agents, respectively. Define the observed data vector $O_{ij} = (X_{ij},Z_{ij},G_{ij},Y_{ij})$. While $U_{ij}$ is generally not observed in malaria trials, $D_{ij}$ is. However, because the methods developed in this paper do not require or explicitly model parasite density we omit it from $O_{ij}$. We drop the subscripts and write $O = (X,Z,G,Y)$ to denote a random draw from a specified population. We will use boldface to indicate corresponding vector and matrix quantities that collect the data of subjects over each site or the entire trial. For example, $\Y_j$ is the $I_j\times 1$ vector that collects all the outcome data for subjects at site $j$ and $\Y = (\Y_1^T,\dots,\Y_J^T)^T$ is the $n\times 1$ vector that collects the outcome data for all subjects in the multi-site trial. For an example of a matrix quantity, $\X_j$ is an $I_j\times d$ matrix and $\X$ is a $n\times d$ matrix.

Multi-site RCTs, i.e., block randomized, are a common design for vaccine efficacy trials, such as the Phase III RTS,S trial \citep{rts2011}. Another design that has commonly been employed in studying the protective efficacy of interventions such as insecticide treated bed nets is the clustered RCT \citep{TER_KUILE_2003}. The notation is easily adapted for this design, letting $j$ indicate a cluster and $Z_{ij} = Z_j$ for all $i,j$. 

\subsection{Potential Outcomes and Malaria-Attributable Fever (or Death)}\label{subsec:PotOut}
\subsubsection{Pitfalls of Standard Case Definitions}\label{subsubsec:pitfalls}

The standard case definition of clinical malaria is the presence of a fever ($Y=1$) and a parasite density above some threshold $d$ ($D > d$). The WHO recommends setting $d$ high enough to achieve specificity $>80\%$ at all sites to avoid severe underestimation of vaccine efficacy \citep{MOORTHY20075115}. Sensitivity is also considered when determining $d$ to avoid under-powered studies. The prevailing method used to determine these thresholds is to model risk of fever as a continuous function of observed parasite density among community controls and clinically suspected cases \citep{smith1994attributable}. However, computing sensitivity and specificity of a case definition requires an estimate of the malaria attributable fraction of fevers (MAFF). \cite{lee2018estimating} show that obtaining unbiased estimates of MAFF in the presence of measurement error of $D$ and fever killing effects on parasite density is a difficult task, especially when malaria and non-malaria infections can work in conjunction to trigger a fever. Still, when unbiased estimates of MAFF can be obtained, this case definition will, by design, result in false positive and false negative cases, which may bias corresponding estimates of vaccine efficacy. Heterogeneous immunity and pyrogenic thresholds further complicate estimating treatment efficacy using case definitions that depend on a fixed threshold for $D$. In practice, standard thresholds without, it seems, careful estimation of specificity and sensitivity such as 2500 parasites per $\mu$l \citep{Olotu_2013} and 5000 parasites per $\mu$l are commonly used \citep{rts2011}. As far as we are aware, there are no sufficiently specific case-definitions for death attributable to malaria, the most severe effect of a malaria infection \citep{MOORTHY20075115}.

\subsubsection{Defining Cases Using Potential Outcomes}

Potential outcomes are a useful framework to precisely define causal effects of interest \citep{rub05}. The potential outcomes that we define now are closely related to those defined in \cite{lee2018estimating}. In what follows, we consider $Y$ to indicate the presence of fever but note that the framework we describe is also suitable when $Y$ indicates death. The aforementioned authors motivate their framework with a biological model of malaria and the notion of a pyrogenic threshold \citep{gravenor1998analysis}. A pyrogenic threshold can roughly be defined as a level of malaria parasite density above which a fever will be produced. Below the threshold, the infection will not be strong enough to trigger a fever. This threshold may vary from individual to individual based on heterogeneous immunity to symptomatic malaria. In general, we can think of $D$ and $U$ as having thresholds above which a malaria or non-malaria infection, respectively, is strong enough to trigger a fever in the absence of any other infection. We can also consider a curve of threshold pairs for $D$ and $U$ above which a combined infection can trigger a fever.

Because $Y$ indicates the presence of a fever resulting from any infection, we can treat $Y$ as a function of both $D$ and $U$. $Y$ can be thought of also as a function of $Z$ and $G$ through their effect on $D$ and $U$. Thus, we can write the potential outcome of $Y$ for treatment $z$ and sickle cell status $g$ as $Y(D(z,g),U(z,g))$ and the observed outcome as $Y = Y(D(Z,G),U(Z,G))$. $Y(D(z,g),U(z,g))$ factors additively into two natural terms,

\begin{equation}
  \underbrace{Y(D(z,g),U(z,g))}_{Y(z,g)} = \underbrace{\{Y(D(z,g),U(z,g)) - Y(0,U(z,g))\}}_{Y^m(z,g)}+ \underbrace{Y(0,U(z,g))}_{Y^{nm}(z,g)}\,.\label{eq:Ydecomp}
\end{equation}
$Y$ can be expressed as a potential outcome in terms of $d$ and $u$ or $z$ and $g$. When it is unambiguous, we may use the abbreviated notation below the ``curly'' braces in \eqref{eq:Ydecomp} to emphasize its dependence on $z$ and $g$. 

The statistical implication of a pyrogenic threshold is that $Y(d,u)$ is monotonic in $d$ for all $u$. That is, $Y(d,u) \le Y(d',u)$ for $0 \le d \le d'$. This ensures that the first term on the right hand side of \eqref{eq:Ydecomp} is non-negative. This term, $Y^m(z,g)$, can be interpreted as {\it malaria-attributable fever}, or a fever that would not have occurred had a malaria infection been absent. These include cases where $D$ was high enough to trigger a fever in the absence of any other infection and also cases where $D$ was high enough to trigger a fever in conjunction with a non-malaria infection. The second term on the right hand side, $Y^{nm}(z,g)$, is a fever that cannot be attributed to malaria. These are fevers that would have still occurred if malaria parasites were not present. However, when $D > 0$, $Y^m(Z,G)$ is generally not observable because it involves the counterfactual term $Y(0,U(Z,G))$. In the following section we will demonstrate that under a few realistic assumptions we can identify vaccine efficacy without directly observing the fevers (or deaths) attributable to malaria.

\subsubsection{A Final Couple Pieces of Notation}

With our potential outcome framework in place, we denote the complete data vector by \[C = (X,Z,G,D,U,\{Y^k(z,g):\, k =m,nm;\, z = 0,1;\, g = 0,1\})\,.\] In reality, we only get to observe one potential outcome $Y(Z,G)$ even though all four exist and, as we have argued, $Y^{nm}$ and $Y^m$ cannot be distinguished with certainty so we only observe $O \subset C$. Finally, we may sometimes treat $Y$, $D$ and $U$ as $p\times 1$ vectors that correspond to fever status, parasite density, and non-malaria infectious load at $p$ follow-up visits during a RCT.

\subsection{Vaccine Efficacy: Definitions, Assumptions and Identification}
\subsubsection{Defining Vaccine Efficacy}

We begin this section by defining a general population-level estimand for vaccine efficacy and then outline the assumptions required for identification when we cannot distinguish $Y^m$ from $Y^{nm}$. 
We suppose that for each $j$, $C_{ij}$ (and $O_{ij}$) are i.i.d. draws from an from an unknown, site-specific target population distribution $P_j\in\cP_j$ for all $i$. We suppose that the overall target population is an equally weighted mixture of the $P_j$ and denote it $P\in\cP$. We let $\mu_{zg}(P),\, \cP \to \bbR$ be a functional of $P$ that depends on treatment and sickle cell status and takes the form $\mu_{zg}(P) = \Exp{f\{Y(z,g)\}}{P}$ for $P$-measurable functions $f$ that are (1) increasing and (2) linear. 
We use a superscript $m$ to indicate that the functional is specific to malaria-attributable outcomes and $nm$ to indicate that it is specific to non-malaria outcomes, for instance, $\mu_{zg}^m(P) = \Exp{Y^m(z,g)}{P}$ when $f(y)=y$. Below we give a general definition of efficacy.

\begin{definition}[Vaccine/Protective Efficacy]\label{defn:efficacy}
  For a specified function $f$, 
  \begin{enumerate}[label=(\roman*)]
    \item Vaccine Efficacy is defined as $\tau(g) = 1 - \mu^m_{1g}(P)/\mu^m_{0g}(P)$ for $g = 0,1$ and
    \item Protective Efficacy of $G$ is defined as $\nu(z) = 1 - \mu^m_{z1}(P)/\mu^m_{z0}(P)$ for $z = 0,1$.
  \end{enumerate}
\end{definition}

We are primarily interested in two simple functions $f$: $f(y) = y$ and when $Y$ is a $p\times 1$ vector of fevers, $f(y) = \1^Ty$. For $f(y) = y$ , efficacy as defined in Definition \ref{defn:efficacy} is the proportion reduction in risk of malaria-attributable fever or death were an individual to receive vaccination versus placebo. For  $f(y) = \1^Ty$, efficacy is defined as the proportion reduction in incidence of malaria-attributable fever were an individual to receive vaccination versus placebo. Since you can only die once,  $f(y) = \1^Ty$ is not applicable when $Y$ indicates malaria-attributable death. By this same logic, $f(y) = y$ can be used for assessing the risk of malaria-attributable deaths over arbitrarily long follow-up. However, if we are interested in the risk of developing a malaria-attributable fever over a longer follow-up, during which a subject can have multiple fevers, the function we'd be interested in is $f(y) = \max y$ where $\max y$ is the maximum over the elements of a $p\times 1$ vector $y$. This function is not a linear function and thus vaccine efficacy against the risk of having at least one malaria-attributable fever over a long follow cannot be handled in this framework. The importance of the linearity of $f$ will become evident shortly.

\subsubsection{Identifying Vaccine Efficacy}\label{subsec:identify}

In \S \ref{sec:MFD_intro} we informally discussed the assumptions required to identify vaccine efficacy using MFD. Assumptions \ref{assump:add} - \ref{assump:valid} formalize the assumptions that are summarized in the second column of Table \ref{tab:parallelAssump} and provide the basis for the identification of $\tau$ proved in Proposition \ref{prop:identify}.

\begin{assumption}[Additivity of $\mu$] \label{assump:add}
  For all $z=0,1$ and $g=0,1$, $\mu_{zg}(P)$ can be decomposed linearly as
  $$ \mu_{zg}(P) = \mu^m_{zg}(P) + \mu^{nm}_{gz}(P)\,.$$
\end{assumption}
Assumption \ref{assump:add} is guaranteed by \eqref{eq:Ydecomp} and the linearity of $f$. We mentioned above that evaluating the vaccine efficacy on the risk scale for malaria-attributable fevers over a long follow-up is problematic when using MFD to estimate $\tau$. The risks of developing a malaria-attributable fever and a non-malaria fever do not satisfy Assumption \ref{assump:add} over long follow-ups because of the non-linearity of $f(y) = \max y$.

\begin{assumption}[No Interaction / Independent Protective Pathways]\label{assump:nointeract}
  $\tau(g)$ is constant over $g = 0,1$ and $\nu(z)$ is constant over $z=0,1$. That is, $\tau \coloneqq \tau(0)=\tau(1)$ and $\nu \coloneqq \nu(0)=\nu(1)$. 
\end{assumption}

The biological reasoning behind Assumption \ref{assump:nointeract} is as follows. If a vaccine and genetic trait protect against malaria through independent pathways at different times in the parasite's life cycle, it is plausible that the vaccine will prevent the same fraction of fevers among those who have the genetic trait ($G=1$) and those who do not ($G=0$). Similarly plausible is that possession of the genetic trait will prevent the same fraction of fevers among those to whom the vaccine was administered ($Z=1$) and to those it was not ($Z=0$). The goal of pre-erythrocytic vaccines like RTS,S is to provoke an immune response that prevents the parasites from entering the liver, stopping the parasites from ever re-entering the bloodstream and causing clinical symptoms \citep{regules2011rts}. In contrast, the sickle cell trait appears to protect against clinical symptoms by inhibiting the growth of parasites once they have re-entered the bloodstream from the liver and by making the host more tolerant to parasite infection \citep{ferreira2011sickle,taylor2012haemoglobinopathies,Williams_2011}. This suggests that Assumption \ref{assump:nointeract} is satisfied for pre-erythrocytic vaccines and the sickle cell trait.

\begin{assumption}[No Interference]\label{assump:nointerfere}
  A subject's potential outcomes are functions of its vaccine and sickle cell status alone. That is $Y_{ij}(\z,\g) = Y_{ij}(z_{ij},g_{ij})$.
\end{assumption}

Assumption \ref{assump:nointerfere} implies that the treatment and sickle cell status of an individual in the trial effects only their own outcome. This assumption can be rephrased in the context of infectious disease as stating that protection conferred by genetics or vaccine do not materially disrupt disease transmission. This assumption is plausible when considering individuals at two different sites $j\ne j'$ in a multi-site trial but is more complicated for individuals at the same site who may live in close proximity. Stochastic simulation models of malaria vaccination using pre-erythrocytic and blood stage-vaccines have, however, found that transmission effects of such vaccines delivered as they would be in pediatric malaria vaccination programs were minimal \citep{penny2008should,penny2015public}. Less is known about the suitability of Assumption \ref{assump:nointerfere} with respect to the sickle cell variant.

\begin{assumption}[Randomization / ``as-if'' Randomized]\label{assump:random}
    At each site $j=1,\dots,J$, vaccines are administered randomly and sickle cell status is distributed ``as-if'' random; that is, $$(Z,G) \indep (Y(z,g),Y^m(z,g),Y^{nm}(z,g),X)$$ for all $z,g \in \{0,1\}^2$. Additionally, we assume that $Z\indep G$.
\end{assumption}

The randomization of $Z$ is ensured by the RCT. We assume that within each center, the sickle cell trait is distributed ``as-if'' random.
Population stratification and linkage disequilibrium are common biological violations of the ``as-if'' random assumption of Mendelian genes \citep{Kang_2013}. Briefly, population stratification is when a subgroups that differ on prognostic factors for developing malaria and other childhood illnesses also systematically differ in the prevalence of HbAS. If either the probability of inheriting HbAS or the distribution of prognostic factors is relatively homogenous within a study site, then population stratification would likely not be a material threat to the ``as-if'' random assumption about $G$. Linkage disequilibrium is the dependence of gene frequencies at two or more loci \citep{MORTON20011105}. If HbAS is in linkage disequilibrium with a gene that affects the risk of childhood illness this may threaten the validity of Assumption \ref{assump:random}. Linkage disequilibrium with a gene that affects the risk of malaria is less problematic as this would not violate the exclusion restriction formalized in the next assumption. However, this would require a more delicate treatment of potential outcomes, e.g., we would need to consider the genes in linkage disequilibrium as having potential outcomes depending on $G$ (see \cite{VanderWeele_2013} for a discussion a related topic of ``versions" of treatment).

\begin{assumption}[Valid Mendelian Factor] \label{assump:valid}
  $G$ is a ``valid Mendelian factor'' in that it satisfies the following conditions:
  \begin{enumerate}[label=(\roman*)]
    \item $\nu \ne 0$; and
    \item $1 - \mu_{z1}^{nm}(P)/\mu_{z0}^{nm}(P) = 0$ for $z = 0,1$.
  \end{enumerate}
\end{assumption}

Part (i) of Assumption \ref{assump:valid} says that the Mendelian factor is {\it relevant} to malaria-attributable outcomes. A large body of literature on the protective properties of the sickle cell trait against malaria supports this assumption for HbAS. Several cohort studies have found evidence that HbAS has 30-50\% efficacy against uncomplicated clinical malaria and multiple other case-control and cohort studies of Africa have estimated even greater efficacies against sever malaria cases of 70-90\% (see \cite{Gong_2013} for a list of several studies reporting the protective efficacy of HbAS). Assumption \ref{assump:valid}(ii) can be expressed in terms of potential outcomes by the restriction that $U$ depends only on treatment status, $U(z,g)=U(z)$. The plausibility of Assumption \ref{assump:valid}(ii) is supported by evidence that the protection conferred by HbAS is ``remarkably specific'' to malaria, providing little protection to other childhood diseases \citep{williams2005sickle}.

The following proposition provides a simple, non-parametric identification strategy for treatment efficacy as defined in Definition \ref{defn:efficacy} under Assumptions \ref{assump:add} - \ref{assump:valid}. We also make the standard assumptions of {\it consistency}, that $Y=Y(z,g)$ when $Z=z$ and $G=g$, and {\it positivity}, that the probability of treatment $Z$ and the prevalence of $G$ in each site are both bounded away from $0$ and $1$.

\begin{proposition}[Nonparametric Identification] \label{prop:identify}
  Suppose that Assumptions \ref{assump:add} - \ref{assump:valid} are satisfied. Then the vaccine efficacy $\tau$ is identified from the observed data $\mathbf{O}$ as
\begin{equation}\label{eq:identify} \tau = 1 - \frac{\Exp{\Exp{f(Y)|\,X,Z=1,G=1}{P}}{X}-\Exp{\Exp{f(Y)|\,X,Z=1,G=0}{P}}{X}}{\Exp{\Exp{f(Y)|\,X,Z=0,G=1}{P}}{X}-\Exp{\Exp{f(Y)|\,X,Z=0,G=0}{P}}{X}}\,, \end{equation}
  where $\mathbb{E}_X$ is the expectation over the marginal distribution of $X$ implied by $P$.
\end{proposition}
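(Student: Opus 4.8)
The plan is to start from the right-hand side of \eqref{eq:identify}, first rewrite the nested observed-data expectations as the potential-outcome functionals $\mu_{zg}(P)$, and then apply Assumptions \ref{assump:add}, \ref{assump:nointeract}, and \ref{assump:valid} in turn to collapse the ratio of differences down to $1-\tau$.

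First I would carry out the identification step proper. Fix $z,g\in\{0,1\}$. By consistency, $f(Y)=f(Y(z,g))$ on the event $\{Z=z,G=g\}$, so $\Exp{f(Y)\mid X,Z=z,G=g}{P}=\Exp{f(Y(z,g))\mid X,Z=z,G=g}{P}$. Assumption \ref{assump:random} gives $(Z,G)\indep\bigl(Y(z,g),X\bigr)$, which in particular yields the conditional independence $f(Y(z,g))\indep(Z,G)\mid X$; hence the conditioning on $\{Z=z,G=g\}$ may be dropped, giving $\Exp{f(Y)\mid X,Z=z,G=g}{P}=\Exp{f(Y(z,g))\mid X}{P}$. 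Taking expectation over the marginal distribution of $X$ and using the tower property then gives $\Exp{\Exp{f(Y)\mid X,Z=z,G=g}{P}}{X}=\Exp{f(Y(z,g))}{P}=\mu_{zg}(P)$. (Positivity ensures each conditional expectation is well defined on the support of $X$, and Assumption \ref{assump:nointerfere} is what legitimizes the scalar notation $Y(z,g)$ in the first place.) Consequently the right-hand side of \eqref{eq:identify} equals
\[ 1-\frac{\mu_{11}(P)-\mu_{10}(P)}{\mu_{01}(P)-\mu_{00}(P)}. \]

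Next I would peel off the non-malaria component. By Assumption \ref{assump:add}, $\mu_{zg}(P)=\mu^m_{zg}(P)+\mu^{nm}_{zg}(P)$, so both the numerator and denominator above split into a malaria difference plus a non-malaria difference in the $g$ index (with $z$ held fixed). Assumption \ref{assump:valid}(ii) says $\mu^{nm}_{z1}(P)=\mu^{nm}_{z0}(P)$ for $z=0,1$, so the two non-malaria differences vanish and the expression reduces to $1-\bigl(\mu^m_{11}(P)-\mu^m_{10}(P)\bigr)/\bigl(\mu^m_{01}(P)-\mu^m_{00}(P)\bigr)$. From Definition \ref{defn:efficacy} we have $\mu^m_{z1}(P)=(1-\nu(z))\mu^m_{z0}(P)$, so by the no-interaction Assumption \ref{assump:nointeract} ($\nu(0)=\nu(1)=\nu$) each difference factors as $\mu^m_{z1}(P)-\mu^m_{z0}(P)=-\nu\,\mu^m_{z0}(P)$. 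Since $\nu\neq0$ by Assumption \ref{assump:valid}(i), and $\mu^m_{00}(P)\neq0$ (implicit in the ratios of Definition \ref{defn:efficacy} being well defined), the common factor $-\nu$ cancels and the ratio becomes $\mu^m_{10}(P)/\mu^m_{00}(P)$. Finally, again by Definition \ref{defn:efficacy} and Assumption \ref{assump:nointeract}, $\mu^m_{10}(P)/\mu^m_{00}(P)=1-\tau(0)=1-\tau$, so the whole expression equals $1-(1-\tau)=\tau$, as claimed.

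The algebra itself is elementary; the one place that needs genuine care is the first step, where the ``as-if randomized'' statement in Assumption \ref{assump:random} must be used to justify replacing the observed conditional mean $\Exp{f(Y)\mid X,Z=z,G=g}{P}$ by the counterfactual conditional mean $\Exp{f(Y(z,g))\mid X}{P}$ — that is, checking that joint independence of $(Z,G)$ from $(Y(z,g),X)$ really does deliver the conditional independence $f(Y(z,g))\indep(Z,G)\mid X$ that the argument uses. The other point to watch is that the denominator $\mu_{01}(P)-\mu_{00}(P)=\mu^m_{01}(P)-\mu^m_{00}(P)=-\nu\,\mu^m_{00}(P)$ is nonzero, which is precisely what Assumption \ref{assump:valid}(i) together with $\mu^m_{00}(P)\neq0$ guarantees, so the ratio in \eqref{eq:identify} is well defined.
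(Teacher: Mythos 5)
Your proposal is correct and follows essentially the same route as the paper's proof: consistency plus no interference to move to potential outcomes, the randomization assumption to equate the iterated observed-data expectation with $\mu_{zg}(P)$, then additivity, the exclusion restriction of Assumption \ref{assump:valid}(ii), and no interaction to collapse the ratio to $1-\tau$. The only cosmetic difference is that you derive the conditional independence $f(Y(z,g))\indep(Z,G)\mid X$ first and then average over the marginal of $X$, whereas the paper swaps $P_X$ for $P_{X\mid Z=z,G=g}$ and collapses via the tower property; these are equivalent uses of Assumption \ref{assump:random}.
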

\begin{proof}
  The proof of Proposition \ref{prop:identify} can be found in Appendix \ref{app:prop1} of the Supplemental Web Materials.
\end{proof} 

\subsubsection{Remarks}
Proposition \ref{prop:identify} still holds under a weaker version of Assumption \ref{assump:random} requiring only that $Z$ and $G$ are independent of potential outcomes (and of each other) {\it conditional} on baseline covariates $X$. Depending on how rich the set of baseline covariates $X$ is, this weaker assumption may be more tenable in the presence of population stratification and linkage disequilibrium.

Equation \eqref{eq:identify} suggests a ratio estimator for $\tau$ that is remarkably similar to the Wald estimator used in Mendelian randomization studies \citep{Wald_1940,burgess2017review}. Without covariates, the Wald estimator of the effect of a non-randomized exposure $E$ on outcome $Y$ using Mendelian gene $G$ can be written as 
\begin{equation}\label{eq:wald}
  \frac{\bar{Y}_{G=1}-\bar{Y}_{G=0}}{\bar{E}_{G=1}-\bar{E}_{G=0}}\,,
\end{equation}
where $\bar{V}_{G=g}$ is the sample average of $V$ for individuals with $G=g$. Both \eqref{eq:identify} and \eqref{eq:wald} involve ratios of averages differenced over $G$. The analogy between \eqref{eq:identify} and the Wald estimator in \eqref{eq:wald} is not by coincidence, but instead arises from a symmetry between the structures of Mendelian randomization with additive effects and Mendelian factorial design with multiplicative effects. More precisely, in Mendelian randomization, the potentially confounded association between the Mendelian gene and the outcome can be decomposed {\it multiplicatively} into the {\it additive} effect of the instrument on the exposure and the {\it additive} effect of the exposure on the outcome. In a Mendelian factorial design, the outcomes can be {\it additively} decomposed into disease-attributable outcomes and non-disease outcomes upon which the treatment and Mendelian factor have {\it multiplicative} effects. This simple structure of Mendelian factorial design can be seen clearly in the $2\times 2$ table in Figure \ref{fig:2x2} of expected outcomes $\mu_{zg}$ for different combinations of $z$ and $g$. Differencing over the columns and taking the ratio over the rows immediately yields $1-\tau$. The $\eta$ term in the top row can be thought of as the {\it spillover efficacy} of the vaccine against non-malaria outcomes. The term drops out when differencing over the columns. 

\begin{figure}[ht]
  \centering
  \renewcommand{\arraystretch}{2}
  \begin{tabular}{rcc}
    & $\mathbf{g=1}$ & $\mathbf{g=0}$ \\
    \cline{2-3}
    \multicolumn{1}{r|}{$\mathbf{z=1}$} &\multicolumn{1}{c|}{$\mu^{nm}(1-\eta)+\mu^m(1-\tau)(1-\nu)$} & \multicolumn{1}{c|}{$\mu^{nm}(1-\eta)+\mu^m(1-\tau)$}\\
    \cline{2-3}
    \multicolumn{1}{r|}{$\mathbf{z=0}$} & \multicolumn{1}{c|}{$\mu^{nm}+\mu^m(1-\nu)$} & \multicolumn{1}{c|}{$\mu^{nm}+\mu^m$}\\
    \cline{2-3}
  \end{tabular}
  \vspace{0.5cm}
  \caption[$2\times 2$ table for Mendelian factorial design.]{$2\times 2$ table for Mendelian factorial design. Each cell represents $\mu_{zg}(P)$ for all combinations of $(z,g) \in \{0,1\}^2$, which is identified by the observed data $\mathbf{O}$ (Proposition \ref{prop:identify}). Differencing over the columns then taking the ratio over the rows yields $1-\tau$. For notational clarity, we drop the subscript from $\mu_{00}$. $\eta$ is the ``spillover efficacy'' the vaccine may provide against non-malaria outcomes.}\label{fig:2x2}
\end{figure}

\subsection{Robust Covariance Adjusted Estimation and Inference}


We now propose a simple substitution estimator that allows for covariance adjustment and is robust to arbitrary misspecification of a model for $\Exp{f(Y)\,|\,X,Z,G}{P}$. The estimation procedure closely resembles that which is developed in \cite{Rosenblum_2010} with a couple minor modifications to deal with the factorial structure of our identification procedure and the possibility that HbAS prevalence varies across sites in a multi-site RCT. The simple procedure is detailed below in Algorithm \ref{alg:est} and requires estimating a simple generalized linear model (GLM)  with the \texttt{R} function \texttt{glm} in the \texttt{stats} package at most two times. In what follows, we suppose $f(y) = \1^Ty$ for expository purposes. That is, we will focus on vaccine efficacy defined as the proportion reduction in the expected number of malaria-attributable fevers.

\begin{algorithm}[Estimation of $\tau$] \label{alg:est}The following substitution estimator is based on \cite{Rosenblum_2010}. For clarity, we let $f(y) = \1^Ty$.

\noindent\rule[1ex]{\textwidth}{0.4pt}
	\begin{enumerate}[label=\arabic*.]
		\item Estimate $\Exp{f(Y)\,|\,X,\,Z,\,G}{P}$ with \texttt{glm} using a canonical link function (depends on $f$).
		\begin{itemize}
			\item Let the linear part include an intercept, main terms for $Z$ and $G$, and interaction $Z\times G$, for example,
			\begin{center}
			 \texttt{mu\_hat\_0 <- glm(f(Y) $\mathtt{\sim}$ X*G*Z, family = poisson())}
			 \end{center}
			\item denote the resulting estimator as $\hat\mu_0(Z,G,X)$. 
		\end{itemize}
		\item If there is more than one site and the prevalence of $G$ and sample sizes variy across sites, let $\texttt{log\_mu\_hat} = \log\hat{\mu}_0(Z_{ij},G_{ij},X_{ij})$, $\texttt{p\_g} = (1/I_j)\sum_{i=1}^{I_j} G_{ij}$, $\texttt{w} = n/I_j$, and \texttt{S} be a categorical variable for site; update $\hat\mu_0$ as follows
		\begin{align*}
			\texttt{mu\_hat\_1 <- }&\texttt{glm(f(Y) $\mathtt{\sim}$ offset(log\_mu\_hat) + S +}\\
			&\texttt{I(w*Z*G/p\_g) + I(w*Z*(1-G)/(1-p\_g)) + I(w*(1-Z)G/p\_g) + }\\
			&\texttt{I(w*(1-Z)*(1-G)/(1-p\_g)), family = poisson()) }
		\end{align*}
		 and denote the resulting estimator as $\hat\mu_1(Z,G,X)$.
		\item Let $\mu_{zg}(P_n) = \frac{1}{J}\sum_{j=1}^J\frac{1}{I_j}\sum_{i=1}^{I_j}\hat\mu_1(z,g,X_{ij})$. 
		\item Construct the plug-in MFD estimator  \[\hat\tau = 1 - \frac{\mu_{11}(P_n) - \mu_{10}(P_n)}{\mu_{01}(P_n) - \mu_{00}(P_n)}\,.\]
	\end{enumerate}
\noindent\rule[1ex]{\textwidth}{0.4pt}
\end{algorithm}

The estimator $\hat\tau$ returned by Algorithm \ref{alg:est} is a special case of a target maximum likelihood estimator (TMLE) \citep{van2011targeted}. A straightforward modification of Theorem 1 in \cite{Rosenblum_2010} yields that $\hat\tau$ is consistent and asymptotically normal under mild regularity conditions even when the working model for the conditional expectation is misspecified. Other initial working models $\hat\mu_0$ may be used as long as they satisfy certain restrictions on how data-adaptive they are \citep{van2011targeted}. The weight terms \texttt{w} are important because we assume that the target population $P$ is an equally weighted mixture of the site-specific populations $P_j$ but allow for study designs with different different sample sizes across sites. The weights ensure that $\hat\mu_1$ solves the efficient influence function estimating equation (see Appendix \ref{app:tmle} of the Supplemental Web Materials). Before we state the result, we introduce some some important notation and definitions.\\ 

\noindent{\it Working Models and their Limits:}

Let the maximum likelihood parameter estimates from steps 1 and 2 of Algorithm \ref{alg:est} be ${\boldsymbol{\beta}}^{(0)}_n$ and ${\boldsymbol{\beta}}^{(1)}_n$, respectively, and define ${\boldsymbol{\beta}}_n = [{\boldsymbol{\beta}}^{(0)}_n,{\boldsymbol{\beta}}^{(1)}_n]$. Now, recall that $P$ is the true, unknown data generating distribution. We can decompose its corresponding density $p$ as follows as follows
\begin{equation}
  p = p(X)p(Z)p_j(G)p(Y|\,X,\,Z,\,G)
\end{equation}
where $p_j(G)$ is the probability a child has the sickle cell trait in site $j$. $p(Z)$ is assumed to be known, e.g., $p(Z)=1/2$ in a balanced trial. $P_n$ is our estimate of $P$ where $p_n$, the density of $P_n$, can be decomposed similarly as
\begin{equation}
  p_n = p_n(X)p(Z)p_{j,n}(G)p_{\beta_n}(Y|\,X,\,Z,\,G)
\end{equation}
where $p_n(X)$ is the empirical distribution of $X$, $p_{j,n}(G)$ is the observed prevalence of the sickle cell trait among children in enrolled in the study at site $j$, and $p_{\boldsymbol{\beta}_n}(Y|\,X,\,Z,\,G)$ is the estimated parametric working model for the conditional distribution of $Y$ from steps 1 and 2 in Algorithm \ref{alg:est}. We assume that the number of centers are fixed, that they are representative of the population of interest, and that the sites carry equal weight in the population they represent but that the sample sizes $I_j$ might be different. Hence, the observations that make up the empirical distribution are weighted by $n/I_j$. We assume that the site-level sample sizes $I_j$ grow at the same rate as $n\to \infty$ and so it follows that $p_n(X) \stackrel{a.s.}{\to} p(X)$ as $n \to \infty$ by the Gilvenko-Cantelli Theorem and an application of the strong law of large numbers gives us that $p_{j,n}(G) \stackrel{a.s.}{\to} p_j(G)$  for all $j = 1,\dots,J$ as $n \to \infty$. Let $P_\infty = \lim_n P_n$ and write its density $p_\infty$ as
\begin{equation}
  p_\infty = p(X)p(Z)p_{j}(G)p_{\boldsymbol{\beta}}(Y|\,X,\,Z,\,G)
\end{equation}
where $\boldsymbol{\beta} = [\boldsymbol{\beta}^{(0)},\boldsymbol{\beta}^{(1)}]$ are the maximizers of the expected log-likelihoods of the GLMs in steps 1 and 2 where the expectation is taken over $P$, if such maximizers exist (see \cite{Rosenblum_2010} for further discussion of the existence of $\boldsymbol{\beta}$). When $\boldsymbol{\beta}$ exists, the conditions given in Proposition \ref{prop:tmle},  are sufficient for $\boldsymbol{\beta}_n$ to converge to $\boldsymbol{\beta}$ in probability \citep{rosenblum2009using}. Note that unless the working parametric model for the conditional mean is correctly specified, $P_\infty$ will not equal $P$ in general. 

For notational convenience, we let $\mathbb{P}$ and $\mathbb{P}_n$ be the expectation operators over $P$ and $P_n$, respectively. For example, we can write $\mu_{zg}(P_n)$ from step 3 of Algorithm \ref{alg:est} as $\mathbb{P}_n\hat\mu_1(z,g,X)$.\\

\noindent{\it Efficient Influence Functions:}

For an arbitrary distribution $Q \in \cP$, the {\it efficient influence function} (EIF) for $\mu_{zg}(Q)$ can be written as
\begin{align}
  \varphi_{zg}(Q)(O) & = \frac{\mathbbm{1}(Z=z)\mathbbm{1}(G=g)(f(Y) - \Exp{f(Y)|\,X,\,Z=z,\,G=g}{Q})}{q_{j}(G=g)q(Z=z)}\notag\\
  &\qquad+ \Exp{f(Y)|\,X,\,Z=z,\,G=g}{Q} - \mu_{zg}(Q)\,,
\end{align}
for $z,g \in \{0,1\}^2$. We will define $\mu_{z}(Q) \coloneqq \mu_{z1}(Q)-\mu_{z0}(Q)$ and $\varphi_{z}(Q) \coloneqq \varphi_{z1}(Q)-\varphi_{z0}(Q)$ for $z=0,1$. Standard calculations verify that $\varphi_{z}(Q)$ is the efficient influence function for $\mu_{z}(Q)$ by demonstrating that $\varphi_{z}(Q)$ can be expressed as a pathwise derivative of $\mu_{z}(Q_\epsilon)$ where $Q_\epsilon$ is a parametric submodel of $Q$ such that $Q=Q_{\epsilon=0}$ \citep{kennedy2016semiparametric}. $\varphi_{z}(Q)$ is said to be a pathwise derivative of $\mu_{z}(Q_\epsilon)$ if $\Exp{\varphi_{z}(Q)S_\epsilon}{Q} = \partial\mu_{z}(Q_\epsilon)/\partial\epsilon|_{\epsilon=0}$ where $S_\epsilon$ is the score function of the parametric submodel. Finally, let $\varphi^*_{zg}(Q) = \varphi_{zg}(Q) - \Exp{\varphi_{zg}(Q)\,|\,G}{P}$ and define $\varphi_{z}^*(Q) \coloneqq \varphi^*_{z1}(Q)-\varphi^*_{z0}(Q)$. We can now state the consistency and asymptotic normality result for $\hat\tau$.

\begin{proposition}[Consistency and Asymptotic Normality]\label{prop:tmle}
	In addition to Assumptions \ref{assump:add} - \ref{assump:valid}, suppose that the number of sites $J$ is fixed, the $I_j$ grow at the same rate as $n$, and the maximizers ${\boldsymbol{\beta}}=[\boldsymbol{\beta}^{(0)},\boldsymbol{\beta}^{(1)}]$ exist. Also, suppose that $\pnorm{\boldsymbol{\beta}}_{\infty} < M$ for pre-specified $M < \infty$. Finally, let $(X,Y)$ be bounded and the terms in the linear parts of the GLMs in steps 1 and 2 of Algoritm \ref{alg:est} be bounded functions on compact subsets of $\{0,1\}^2\times\mathbb{R}^d$. Then, $\hat\tau$ is consistent and $\sqrt{n}(\hat\tau-\tau)$ convergence in distribution to a Gaussian with mean $0$ and variance
	\begin{equation}
  	   \sigma^2 = \Exp{\frac{\mu_1(P)}{\mu_0(P)^2}(\varphi^*_{0}(P_\infty)(O) - \mathbb{P}\varphi^*_{0}(P_\infty)(O)\} - \frac{1}{\mu_0(P)}\{\varphi^*_{1}(P_\infty)(O) - \mathbb{P}\varphi^*_{1}(P_\infty)(O)\} }{P}^2\,.\notag
	\end{equation} 
	When the prevalence of $G$ does not vary between sites and the study is balanced, i.e., $I_j = n/J$ for all $j$, then step 2 of Algorithm \ref{alg:est} can be skipped. Step 2 may also be skipped if it is a single-site trial. Furthermore, if the working model for $\Exp{f(Y)\,|\, X,Z,G}{P}$ is correctly specified then $\sigma^2$ achieves the semiparametric efficiency bound.
\end{proposition}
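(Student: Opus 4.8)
The plan is to treat $\hat\tau$ as a smooth functional of the four cell means $\mu_{11}(P_n),\mu_{10}(P_n),\mu_{01}(P_n),\mu_{00}(P_n)$, establish a joint central limit theorem for that vector by adapting Theorem~1 of \cite{Rosenblum_2010} to the factorial, multi-site setting, and then finish with the delta method. First I would record consistency. Under the stated boundedness of $(X,Y)$ and of the linear-predictor terms, and the assumed existence of the population maximizers $\boldsymbol\beta=[\boldsymbol\beta^{(0)},\boldsymbol\beta^{(1)}]$ with $\pnorm{\boldsymbol\beta}_\infty<M$, the argument of \cite{rosenblum2009using} gives $\boldsymbol\beta_n\stackrel{p}{\to}\boldsymbol\beta$, hence $\hat\mu_1(z,g,\cdot)$ converges uniformly on the compact covariate support to the limiting working regression; combined with $p_n(X)\stackrel{a.s.}{\to}p(X)$ (Glivenko--Cantelli) and $p_{j,n}(G)\stackrel{a.s.}{\to}p_j(G)$ (strong law, using that the $I_j$ grow with $n$), this yields $\mu_{zg}(P_n)\stackrel{p}{\to}\mathbb{P}\hat\mu_{1,\boldsymbol\beta}(z,g,X)$. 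The key point is that, because $\mathbb{P}(Z=z)$ is known and $p_{j,n}(G)$ is consistent for the true site prevalence, the step-2 clever covariates force $\mathbb{P}_n\varphi_{zg}(P_n)=0$ (with $\mathbb{P}_n$ the reweighted mixture empirical), so the substitution estimator inherits the double robustness of TMLE and its limit equals $\mu_{zg}(P)$ even when the outcome working model is misspecified. Since $\mu_0(P)=\mu_{01}(P)-\mu_{00}(P)=-\mu^m\nu\ne 0$ by Assumption~\ref{assump:valid}(i), the continuous mapping theorem gives $\hat\tau\stackrel{p}{\to}\tau$, the limit being identified as $\tau$ by Proposition~\ref{prop:identify}.

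For asymptotic normality I would run the standard TMLE expansion cell by cell: a von~Mises expansion of $\mu_{zg}(P_n)-\mu_{zg}(P)$ about $P_\infty$, together with $\mathbb{P}_n\varphi_{zg}(P_n)=0$, gives $\mu_{zg}(P_n)-\mu_{zg}(P)=(\mathbb{P}_n-\mathbb{P})\varphi_{zg}(P_\infty)+R_{2,zg}+o_p(n^{-1/2})$, where the empirical-process remainder is $o_p(n^{-1/2})$ by the Donsker property of the finite-dimensional, uniformly bounded working-model class and the consistency of $\boldsymbol\beta_n$. Because the ``propensity'' for $G$ is \emph{estimated} (by the $\sqrt n$-consistent empirical proportion $\texttt{p\_g}$, which is plugged into the clever covariates) rather than known, the second-order remainder $R_{2,zg}$ contributes at first order; a standard calculation shows its net effect is to replace $\varphi_{zg}(P_\infty)$ by its residual after projection onto the tangent space of the site-wise model for $G$, which is precisely the passage to $\varphi^*_{zg}(P_\infty)=\varphi_{zg}(P_\infty)-\Exp{\varphi_{zg}(P_\infty)\mid G}{P}$ (a projection, so the variance can only decrease, consistent with the earlier remark that estimating a known propensity is never harmful, and with why only $G$, not $Z$, is centered out). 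This makes the vector of cell means jointly asymptotically linear with influence functions $\varphi^*_{zg}(P_\infty)-\mathbb{P}\varphi^*_{zg}(P_\infty)$.

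It then remains to push this through $\hat\tau=h(\mu_{11},\mu_{10},\mu_{01},\mu_{00})$ with $h(a,b,c,d)=1-(a-b)/(c-d)$. Since $c-d\to\mu_0(P)\ne 0$, $h$ is continuously differentiable near the limit, with gradient $(-1/\mu_0(P),\,1/\mu_0(P),\,\mu_1(P)/\mu_0(P)^2,\,-\mu_1(P)/\mu_0(P)^2)$ at the truth; the delta method delivers $\sqrt n(\hat\tau-\tau)\stackrel{d}{\to}N(0,\sigma^2)$ with influence function $\tfrac{\mu_1(P)}{\mu_0(P)^2}\{\varphi^*_0(P_\infty)-\mathbb{P}\varphi^*_0(P_\infty)\}-\tfrac{1}{\mu_0(P)}\{\varphi^*_1(P_\infty)-\mathbb{P}\varphi^*_1(P_\infty)\}$, whose variance is exactly the stated $\sigma^2$. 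The two ``skip step~2'' claims I would verify directly: with a single site there are no site dummies to include and the weights $\texttt{w}$ are unity, while with common prevalence and $I_j\equiv n/J$ the step-1 weights already equal the target mixture and the common $\texttt{p\_g}$ factors out of the clever covariates; in both cases the span of $\{1,Z,G,ZG\}$ in the step-1 GLM under the canonical link equals the span of the four cell indicators, so the step-1 MLE score equations already imply $\mathbb{P}_n\varphi_{zg}(P_n)=0$. Finally, when the outcome working model is correctly specified, $P_\infty=P$ and $\varphi_{zg}(P_\infty)=\varphi_{zg}(P)$ is the EIF for $\mu_{zg}$; its $G$-projected version is the EIF for the submodel with $\mathbb{P}(Z=z)$ known, and the delta-method combination above is the canonical gradient of $\tau$, so the regular asymptotically linear estimator $\hat\tau$ attains the semiparametric efficiency bound.

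The hard part will not be the delta method — routine once the cell means are handled — but the bookkeeping that departs from \cite{Rosenblum_2010}: verifying that the specific step-2 regression (site dummies $\texttt{S}$, four clever covariates reweighted by $\texttt{w}=n/I_j$, and the plugged-in empirical prevalences $\texttt{p\_g}$) simultaneously (i) solves the efficient-influence-function estimating equation $\mathbb{P}_n\varphi_{zg}(P_n)=0$ for the \emph{equally weighted mixture} target parameter, and (ii) produces exactly the $G$-centered influence function $\varphi^*_{zg}$, i.e.\ that the first-order contribution of estimating the site-specific HbAS prevalences is captured by the projection onto the $G$-tangent space. Adapting the single-treatment, known-propensity, single-population Theorem~1 of \cite{Rosenblum_2010} to this $2\times2$ factorial structure with a second, site-wise estimated propensity and multi-site reweighting is the technical core, and the place where uniform-integrability/Donsker conditions implied by the boundedness hypotheses do the real work.
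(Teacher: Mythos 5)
Your proposal is correct and follows essentially the same architecture as the paper's proof: establish asymptotic linearity of the four cell means with influence functions $\varphi^*_{zg}(P_\infty)$, then apply the delta method to $h(a,b,c,d)=1-(a-b)/(c-d)$, with the gradient you compute matching the stated $\sigma^2$ exactly. The one place where your route differs is in how the passage from $\varphi_{zg}$ to the $G$-centered $\varphi^*_{zg}$ is justified. You frame it as a von Mises expansion about $P_\infty$ with a second-order remainder that contributes at first order because $p_j(G)$ is estimated, and then invoke the general ``estimated propensity projects the influence function onto the orthocomplement of the propensity tangent space'' result, deferring the verification as a ``standard calculation.'' The paper instead proves the \emph{exact} finite-sample identity $\mu_{zg}(P_n)-\mu_{zg}(P)=(\mathbb{P}_n-\mathbb{P})\varphi^*_{zg}(P_n)$ with no remainder at all: it combines the step-2 score equations (which give $\mathbb{P}_n\varphi_{zg}(P_n)=0$), iterated expectations (which give $\mathbb{P}\varphi^*_{zg}(P_n)=0$), and a direct computation showing $-\mathbb{P}_n\{\Exp{\varphi_{zg}(P_n)\,|\,G}{P}\}=\mu_{zg}(P_n)-\mu_{zg}(P)$, the last step using $X\indep G$. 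Asymptotic linearity at $\varphi^*_{zg}(P_\infty)$ then follows from the Donsker property and an explicit verification that $\mathbb{P}\{\varphi^*_{zg}(P_n)-\varphi^*_{zg}(P_\infty)\}^2\to 0$, invoking parts of Theorem 1 of van der Laan and Rubin rather than a remainder analysis. The paper's identity is exactly the concrete content of the ``standard calculation'' you flag as the technical core, so your plan is sound; if you carry it out you should arrive at the same algebra, and the exact-identity formulation is the cleaner way to write it up since it sidesteps any discussion of second-order remainders. Your verification of the ``skip step 2'' claims and the efficiency statement match the paper's (briefer) treatment.
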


\begin{proof}
  A sketch of the proof of Proposition \ref{prop:tmle} can be found in Appendix \ref{app:tmle} of the Supplemental Web Materials.
\end{proof}

\begin{remark}
	We remarked in \S \ref{subsec:identify} that Proposition \ref{prop:identify} still holds under a weaker version of Assumption \ref{assump:random} where the independence is conditional on baseline covariates $X$. Proposition \ref{prop:tmle} also holds under these weaker assumptions as long as either the working model for the conditional expectation or the conditional assignment models for $G$ and $Z$ are correctly specified.
\end{remark}

\noindent{\it Variance Estimator:}

Notice that $\varphi^*_z$ is defined as a projection of $\varphi_z$ into a lower dimensional subspace and will thus have smaller variance. With that in mind, we can use $\varphi_z(P_n)$ to construct a conservative plug-in estimator of $\sigma^2$. In Appendix \ref{app:tmle} of the Supplemental Web Materials we show that $P_n$ solves the EIF estimating equations, that is, $\mathbb{P}_n\varphi_z(P_n)(O) = 0$ for $z=0,1$. It follows then that the plug-in estimator of the scale variance $\sigma^2$ can be written simply as

\begin{equation}
  n\cdot\widehat\var(\hat\tau) = \frac{1}{n}\sum_{j=1}^J\sum_{i=1}^{I_j} \left\{ \varphi_{0}(P_n)(O_{ij})\frac{\mu_{1}(P_n)}{\mu_{1}(P_n)^2} - \varphi_{1}(P_n)(O_{ij})\frac{1}{\mu_{0}(P_n)} \right\}^2
\end{equation} 

With this variance estimator we can now use Proposition \ref{prop:tmle} to conduct inference on and construct confidence intervals for $\hat\tau$.\\

\noindent{\it Naive Estimator:}

For simplicity, let's assume that we have balanced sites, i.e., $I_j=n/J$ for all $j$. Then, had we assumed that all fevers with any parasitemia were malaria-attributable fevers, we might considered the following {\it naive estimator}
\begin{equation}
  \hat\tau_0 = 1 - \frac{\mathbb{P}_n \hat\mu_0(1,G,X)}{\mathbb{P}_n \hat\mu_0(0,G,X)}\,.
\end{equation}
The naive estimator corresponds with standard estimates of VE with respect to a commonly used secondary case definition of the presence of a fever ($Y=1$) and any positive parasite density ($D>0$) \citep{Olotu_2013}.  We will see in the following section that the estimator is systematically biased but can be combined with our MFD estimator $\hat\tau$ to construct an estimator that outperforms both estimators on their own.

\subsection{Simulation Study: Comparison to Naive Identification Strategy}\label{subsec:simulation_irr}
In this section we investigate the performance of our proposed MFD estimator $\hat\tau$ and compare it to that of the naive estimator $\hat\tau_0$ that assume all fevers with any parasitemia are malaria fevers. As in the previous section, we consider $f(y) = \1^Ty$ and thus $\tau$ is the proportion reduction in expected number of malaria-attributable fevers. We consider a single-site RCT $J=1$ with equal sized vaccine and placebo arms and the prevalence of HbAS set to 20\% based on existing estimates of the prevalence in sub-Saharan Africa \citep{TER_KUILE_2003, elguero2015malaria}. 

We simulate the number of malaria-attributable fevers and non-malaria fevers from negative binomial distributions in a single year of follow up for each individual. Evidence suggests that the negative binomial distribution fits the empirical distribution of the number of clinical malaria events an individual experiences well \citep{Olotu_2013}. We can also see from \eqref{eq:Ydecomp} that $Y^m$ and $Y^{nm}$ are negatively dependent  -- a fever cannot be both attributable to malaria and have been present in the absence of malaria. To model this dependence structure, we use a Gaussian copula with negative dependence parameter $\rho=-0.1$ \citep{genest2007primer}. The negative dependence is modest when fevers are rare but will be more pronounced in areas where fever risk is higher, e.g., villages with poor sanitation. We suppose there is a single observed covariate $X$ that enters the conditional mean function for both $\1^TY^m$ and $\1^TY^{nm}$ and that there is unobserved heterogeneity in the conditional means between individuals. The generating distributions were calibrated so that the average number of any-cause fevers is 1.5 per child-year. We also calibrated the generating distributions to achieve different levels of case specificity, which we define formally below.

Two different sample sizes were chosen to assess how performance improved asymptotically: 1000 subjects per trial arm, i.e., $n = 2000$, and 2500 subjects per trial arm, i.e., $n=5000$. For $n=2000$, we consider $5\times 2\times 2$ different combinations of vaccine efficacy $\tau \in \{0.3,0.4,0.5,0.6,0.7\}$, protective efficacy of the Mendelian factor $\nu \in \{0.3,0.5\}$, and specificities $s\in\{0.5,0.8\}$. Formally, we define specificity $s$ here as the expected number of malaria-attributable fevers divided by the expected number of fever with any parasitemia and of any cause under placebo,

\begin{equation}
	s = \frac{\Exp{\1^TY^m(0,G)}{P}}{\Exp{\1^TY(0,G)}{P}} = \frac{p(G=1)\mu_{01}^m + p(G=0)\mu_{00}^m}{p(G=1)\mu_{01} + p(G=0)\mu_{00}}\,.
\end{equation}

 The specificity choices are motivated by \cite{mabunda2009country}, which estimated the specificity of standard case definitions using $>0$ and $>2500$ parasites per $\mu$l cutoffs for children under five years of age to be roughly $50\%$ and $80\%$, respectively. The strength of the Mendelian factor was calibrated to the aforementioned estimates of the protective efficacy of HbAS. We consider the same combinations for $n=5000$ except only for the weaker protective efficacy setting. The spillover efficacy $\eta$ was assumed to be zero. Finally, we allow for non-constant vaccine and Mendelian protective efficacy -- both $\tau_i$ and $\nu_i$ are log-normally distributed with mean $\tau$ and $\nu$, respectively, and standard deviation approximately $0.05$ giving coefficients of variation of about 8-17\%. Rather than conferring complete protection to a fraction of the vaccinated subjects, all vaccinated subjects receive partial protection. This is consistent with evidence that RTS,S is a ``leaky" vaccine, providing at least partial protection to all recipients of the vaccine \citep{moorthy2009immunological}. Each setting was simulated $N_{sim}=5000$ times. More details of the simulation settings can be found in Appendix \ref{app:simdetail} of the Supplemental Web Materials. \texttt{R} code to reproduce the simulation results can be found in the Supplementary Materials.

We use Poisson regression for the initial working model estimate in Algorithm \ref{alg:est} and because $J=1$, we skip step 2. The results of the simulation study are summarized in the following two tables. In Table \ref{tab:bias_rmse_irr} we compare the absolute proportional bias and root mean squared error (RMSE) of the $\hat\tau$ (MFD) and $\hat\tau_0$ (Naive). For $n=2000$, the MFD estimator has very good bias properties, with proportional bias less than 5\% for most settings. The RMSE increases for Mendelian genes that are less protective and as the specificity decreases. The only area of poor bias performance is when  the Mendelian gene is weakly protective ($\nu = 0.3$) and the specificity is low (0.5). However, although the estimator in this setting has high variance the bias and power at $\tau=0.7$ are reasonably adequate. Like the presence of small sample bias and high variance for weak instruments in instrumental variable analysis \citep{imb05}, the poor performance appears to be a small sample property as the performance for the weak Mendelian gene, low specificity setting improves notably for $n=5000$. 

The bias in the naive estimator only varies over the different specificity settings and does not improve as the sample size grows. In fact, because there is no spillover efficacy, the proportional absolute bias is equal to $1- s$. You can see from the table that the RMSE is driven almost entirely by the bias component for the naive estimator and it actually increases in absolute terms as the vaccine efficacy increases. These properties lead to very poor coverage of confidence intervals derived from the naive estimator. That one should expect efficacy estimates to be biased by as much as 20\% when specificity is at the level recommended by the WHO should be cause for concern.

\begin{table}\caption{Proportional absolute bias and root mean squared error (RMSE) of MFD and naive estimators using $N_{sim}=5000$ simulations.\label{tab:bias_rmse_irr} \vspace{0.25cm}}
  \ra{1.2}
  \centering
  \begin{tabular}{rrcccccccccccc}
    \toprule
    && \multicolumn{5}{c}{Specificity $=0.8$} && \multicolumn{5}{c}{Specificity $=0.5$}\\
    \cmidrule{3-7}\cmidrule{9-13}
     & & \multicolumn{2}{c}{Prop. $|$Bias$|$} & \phantom{a} & \multicolumn{2}{c}{RMSE} & \phantom{ab} & \multicolumn{2}{c}{Prop. $|$Bias$|$} & \phantom{a} & \multicolumn{2}{c}{RMSE} \\
    \cmidrule{3-4}\cmidrule{6-7}  \cmidrule{9-10}\cmidrule{12-13}
     $\nu$ & $\tau$ & MFD & Naive && MFD & Naive && MFD& Naive && MFD & Naive \\
    \midrule
    $\mathit{n=1000 (\times 2)}$\\ 
0.50 & 0.30 & 0.03 & 0.20 &  & 0.15 & 0.07 &  & 0.12 & 0.50 &  & 0.29 & 0.15 \\ 
   & 0.40 & 0.02 & 0.20 &  & 0.14 & 0.08 &  & 0.07 & 0.50 &  & 0.27 & 0.20 \\ 
   & 0.50 & 0.02 & 0.20 &  & 0.12 & 0.10 &  & 0.05 & 0.50 &  & 0.25 & 0.25 \\ 
   & 0.60 & 0.02 & 0.20 &  & 0.11 & 0.12 &  & 0.04 & 0.50 &  & 0.24 & 0.30 \\ 
   & 0.70 & 0.01 & 0.20 &  & 0.10 & 0.14 &  & 0.02 & 0.50 &  & 0.22 & 0.35 \vspace{0.2cm}\\ 
  0.30 & 0.30 & 0.12 & 0.20 &  & 0.36 & 0.07 &  & 0.59 & 0.50 &  & 5.47 & 0.15 \\ 
   & 0.40 & 0.11 & 0.20 &  & 0.30 & 0.09 &  & 0.32 & 0.50 &  & 4.54 & 0.20 \\ 
   & 0.50 & 0.08 & 0.20 &  & 0.29 & 0.10 &  & 0.21 & 0.50 &  & 5.80 & 0.25 \\ 
   & 0.60 & 0.06 & 0.20 &  & 0.24 & 0.12 &  & 0.18 & 0.50 &  & 4.61 & 0.30 \\ 
   & 0.70 & 0.03 & 0.20 &  & 0.21 & 0.14 &  & 0.11 & 0.50 &  & 1.69 & 0.35 \\ 
   $\mathit{n=2500 (\times 2)}$\\
  0.30 & 0.30 & 0.06 & 0.20 &  & 0.18 & 0.06 &  & 0.18 & 0.50 &  & 0.37 & 0.15 \\ 
   & 0.40 & 0.04 & 0.20 &  & 0.17 & 0.08 &  & 0.13 & 0.50 &  & 0.34 & 0.20 \\ 
   & 0.50 & 0.02 & 0.20 &  & 0.14 & 0.10 &  & 0.08 & 0.50 &  & 0.30 & 0.25 \\ 
   & 0.60 & 0.02 & 0.20 &  & 0.13 & 0.12 &  & 0.06 & 0.50 &  & 0.29 & 0.30 \\ 
   & 0.70 & 0.01 & 0.20 &  & 0.12 & 0.14 &  & 0.04 & 0.50 &  & 0.27 & 0.35 \\ 
    \bottomrule
  \end{tabular}
\end{table} 

Table \ref{tab:cov_pow_irr} below compares the coverage of two-sided 95\% confidence intervals and the power against the two-sided alternative at 5\% significance for the MFD and naive procedures. The MFD confidence interval has correct or conservative coverage and decent power in even some of the more unfavorable settings. Even in the small sample, weakly protective Mendelian gene, and low specificity setting the power is not negligible at higher vaccine efficacies. Because the naive estimator is systematically biased, the coverage properties are extremely poor in all settings. The variance of the naive estimator tends to be far smaller than the MFD estimator. This fact coupled with the systematic bias leads to high power and low coverage across all settings.

\begin{table}\caption{Coverage of two-sided $95\%$ confidence interval and power against two-sided alternative at $5\%$ significance level of MFD and naive estimators using $N_{sim}=5000$ simulations.\label{tab:cov_pow_irr} \vspace{0.25cm}}
  \ra{1.2}
  \centering
  \begin{tabular}{rrcccccccccccc}
    \toprule
    && \multicolumn{5}{c}{Specificity $=0.8$} && \multicolumn{5}{c}{Specificity $=0.5$}\\
    \cmidrule{3-7}\cmidrule{9-13}
     & & \multicolumn{2}{c}{Coverage} & \phantom{a} & \multicolumn{2}{c}{Power} & \phantom{ab} & \multicolumn{2}{c}{Coverage} & \phantom{a} & \multicolumn{2}{c}{Power} \\
    \cmidrule{3-4}\cmidrule{6-7}  \cmidrule{9-10}\cmidrule{12-13}
     $\nu$ & $\tau$ & MFD & Naive && MFD & Naive && MFD& Naive && MFD & Naive \\
    \midrule
    $\mathit{n=1000 (\times 2)}$\\
0.50 & 0.30 & 0.95 & 0.50 &  & 0.56 & 1.00 &  & 0.96 & 0.00 &  & 0.29 & 0.99 \\ 
   & 0.40 & 0.95 & 0.17 &  & 0.79 & 1.00 &  & 0.96 & 0.00 &  & 0.44 & 1.00 \\ 
   & 0.50 & 0.96 & 0.01 &  & 0.94 & 1.00 &  & 0.96 & 0.00 &  & 0.59 & 1.00 \\ 
   & 0.60 & 0.96 & 0.00 &  & 0.99 & 1.00 &  & 0.96 & 0.00 &  & 0.73 & 1.00 \\ 
   & 0.70 & 0.96 & 0.00 &  & 1.00 & 1.00 &  & 0.97 & 0.00 &  & 0.84 & 1.00  \vspace{0.2cm} \\ 
  0.30 & 0.30 & 0.95 & 0.50 &  & 0.30 & 1.00 &  & 0.95 & 0.00 &  & 0.19 & 0.99 \\ 
   & 0.40 & 0.95 & 0.16 &  & 0.44 & 1.00 &  & 0.96 & 0.00 &  & 0.25 & 1.00 \\ 
   & 0.50 & 0.96 & 0.01 &  & 0.57 & 1.00 &  & 0.96 & 0.00 &  & 0.33 & 1.00 \\ 
   & 0.60 & 0.96 & 0.00 &  & 0.74 & 1.00 &  & 0.97 & 0.00 &  & 0.42 & 1.00 \\ 
   & 0.70 & 0.96 & 0.00 &  & 0.86 & 1.00 &  & 0.98 & 0.00 &  & 0.51 & 1.00 \\ 
   $\mathit{n=2500 (\times 2)}$\\
  0.30 & 0.30 & 0.95 & 0.11 &  & 0.45 & 1.00 &  & 0.95 & 0.00 &  & 0.26 & 1.00 \\ 
   & 0.40 & 0.96 & 0.00 &  & 0.68 & 1.00 &  & 0.96 & 0.00 &  & 0.38 & 1.00 \\ 
   & 0.50 & 0.95 & 0.00 &  & 0.87 & 1.00 &  & 0.96 & 0.00 &  & 0.52 & 1.00 \\ 
   & 0.60 & 0.96 & 0.00 &  & 0.96 & 1.00 &  & 0.97 & 0.00 &  & 0.66 & 1.00 \\ 
   & 0.70 & 0.96 & 0.00 &  & 0.99 & 1.00 &  & 0.97 & 0.00 &  & 0.77 & 1.00 \\ 
    \bottomrule
  \end{tabular}
\end{table} 

Even in settings where the mean proportional bias is high, the MFD estimator appears to have the desirable property of being median unbiased. Figure \ref{fig:med_unbiased} demonstrates this for $\tau = 0.3,0.5,$ and $0.7$ in all six settings detailed in Tables  \ref{tab:bias_rmse_irr} and \ref{tab:cov_pow_irr}. The dashed lines indicates the true value of $\tau$, the boxes represents the IQRs, and the whiskers indicate the 5\% and 95\% quantiles of the simulated estimates. The diamonds indicate the means and the vertical solid lines the medians. The worsening performance of the MFD as the Mendelian gene weakens is clear (settings 2 vs. 1, 4 vs. 3, and 6 vs. 5) as is the improvement in the weak Mendelian gene settings as the sample size grows (settings 5 vs. 3 and 6 vs. 4). The naive estimates have much lower variance but are systematically mean and median biased downward. The MFD estimates are also mean biased downward. Although the estimator is asymptotically normal,  the ratio of means that are jointly asymptotically normal may have a peculiar non-normal form in finite samples which may explain this particular pattern of bias \citep{marsaglia2006ratios}.

\begin{figure}[ht]
\centering
\includegraphics[scale=0.5]{./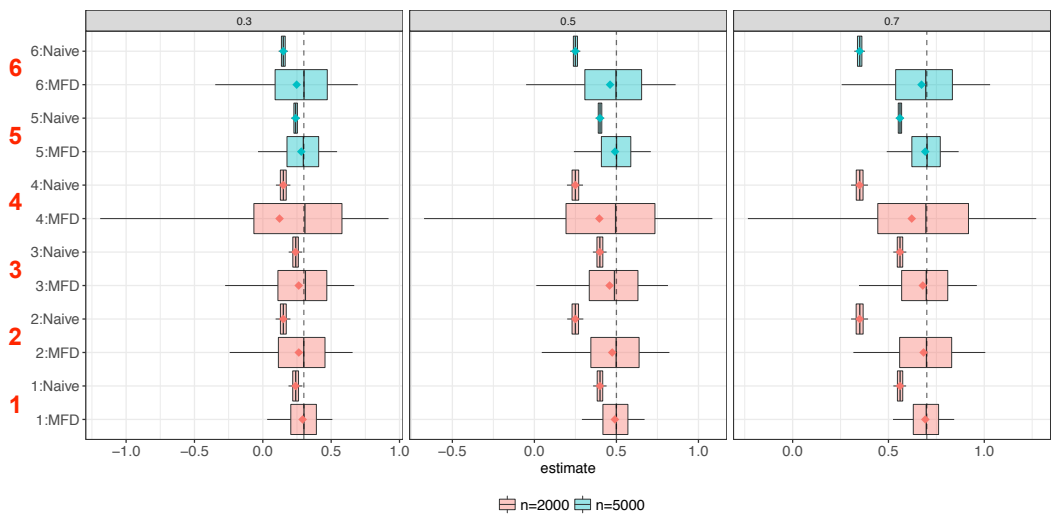} 
\caption[Distributions of simulated MFD estimator $\hat\tau$ and naive estimator $\hat\tau_0$ over several settings.]{Distributions of simulated MFD estimator $\hat\tau$ and naive estimator $\hat\tau_0$ over several settings and $\tau=0.3,0.5,0.7$. Setting \textbf{1}: strong factor ($\nu=0.5$), high specificity ($s=0.8$); setting \textbf{2}: strong factor, low specificity ($s=0.5$); settings \textbf{3} and \textbf{5}: weak factor ($\nu=0.3$), high specificity; settings \textbf{4} and \textbf{6}: weak factor, low specificity. Dashed lines indicate true efficacy, boxes indicate IQRs, whiskers are 5\% and 95\% quantiles, diamonds are mean estimate, and vertical solid lines are median estimates.}\label{fig:med_unbiased}
\end{figure}

\subsection{Improved Estimators: Leveraging the Identifying Assumptions}

\subsubsection{A Simple Correction to the Naive Estimator}

The observation that the proportional absolute bias of $\hat\tau_0$ is equal to $1-s$ in Table \ref{tab:bias_rmse_irr} for all settings comes from the fact that, when the sample sizes are balanced across sites, an immediate application of Theorem 1 of \cite{Rosenblum_2010} yields that $\hat\tau_0$ is a consistent estimator of $s\tau + (1-s)\eta$. When samples are small or the Mendelian gene is only weakly protective, we observed that the MFD estimator will be less powerful and may suffer from small sample bias. In such settings, the asymptotic limit of $\hat\tau_0$ suggests a simple correction to the naive estimator: dividing by $s$. Call this the $s$-{\it corrected estimator}, which is consistent for $\tau$ when $\eta=0$ and $s$ is known or consistently estimated itself.  If instead we only have a $1-\beta$ confidence interval for $s$, $\cC_\beta$, then we can still construct a valid confidence interval for $\tau$ using the $s$-corrected estimator. Let $\text{CI}_{s,\alpha+\beta}$ be a $1-\alpha-\beta$ confidence interval for $\tau$ constructed using the $s$-corrected estimator and define the $s$-corrected $1-\alpha$ confidence interval $\text{CI}_{corr,\alpha} = \bigcup_{s\in \cC_\beta} \text{CI}_{s,\alpha+\beta}$. \cite{berger1994} show that one can construct valid p-values by maximizing a non-pivotal p-value over the confidence set of a nuisance parameter. This result is easily inverted, providing a procedure to construct valid confidence intervals from which it follows that $\text{CI}_{corr,\alpha}$ will have the correct (conservative) coverage for $\tau$. However, as we mentioned earlier in \S \ref{subsubsec:pitfalls}, estimating and conducting inference about $s$ is challenging. However, we can still make use of the naive estimator even when we do not know $s$. 

\subsubsection{The Best of Both Worlds? Combining MFD and Naive Estimators}

The naive estimator and the MFD estimator have complementary strengths and different weaknesses -- $\hat\tau_0$ tends to be more efficient but is asymptotically biased and $\hat\tau$ is consistent but has higher variance and requires larger sample sizes. The naive estimator also has the nice feature of being a consistent lower bound of $\tau$ as long as $\eta \le \tau$ -- it is very plausible that a well designed vaccine will have a higher vaccine efficacy than spillover efficacy. Additionally, we have the logical constraint that $\tau\le 1$ since a treatment efficacy greater than one would imply that it would be possible to have a negative number of malaria-attributable fevers, a scenario eliminated by the assumption that $Y(d,u)$ is monotonically increasing in $d$ for all levels $u$. The following proposition constructs an estimator that uses these upper and lower bounds to improve the performance of $\hat\tau$ in difficult settings, e.g., small samples, weak Mendelian factor, and low specificity.

\begin{proposition}[Bounded Estimator]\label{prop:bnd_estimator}
Suppose that $\eta \le \tau$. Let $\hat\tau_{0}$ be the naive estimator and let
\begin{align*}
	L_{\alpha} & = \hat\tau - \Phi(1-\alpha) \widehat\var(\hat\tau)^{1/2}\\
	L_{0,\alpha} &= \hat\tau_0 - \Phi(1-\alpha)  \widehat\var(\hat\tau_0)^{1/2}\,.\\
\end{align*}
Define the upper confidence bounds $U_{\alpha}$ and $U_{0,\alpha}$ similarly. Then (i)
\begin{equation}
	\hat\tau_{bnd} = \min\left[1,\max\left\{\hat\tau,L_{0,\tilde\alpha}\right\}\right]
\end{equation}
is consistent for $\tau$ when $\tilde\alpha$ is bounded away from $1$; and (ii), for $0 \le \alpha_0 \le \alpha/2$
\begin{equation}
	\text{CI}_{bnd,\alpha} = \left[\max\{L_{\alpha/2-\alpha_0},L_{0,\alpha_0}\},\min\left\{1,U_{\alpha/2}\right\}\right]
\end{equation}
is an asymptotically valid $1-\alpha$ confidence interval for $\tau$.
\end{proposition}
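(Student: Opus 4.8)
The plan is to handle parts (i) and (ii) separately, in each case leaning on two facts established earlier. First, by Proposition~\ref{prop:tmle}, $\hat\tau$ is consistent and asymptotically normal with a (conservative) consistent estimator $\widehat\var(\hat\tau)$ of its scaled variance; and by the same TMLE argument applied to the naive working model (an immediate application of Theorem~1 of \cite{Rosenblum_2010}), $\hat\tau_0$ is consistent and asymptotically normal with limit $\tau_0^* \coloneqq s\tau + (1-s)\eta$ and its own consistent asymptotic-variance estimator $\widehat\var(\hat\tau_0)$; in particular $\widehat\var(\hat\tau)^{1/2}$ and $\widehat\var(\hat\tau_0)^{1/2}$ are $o_p(1)$. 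Second, two order relations do the real work: the hypothesis $\eta \le \tau$ together with $s \in (0,1]$ gives $\tau_0^* \le \tau$ (the naive estimator's asymptotic bias is one-sided, downward), and the logical ceiling on efficacy gives $\tau \le 1$.

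For part (i) I would invoke the continuous mapping theorem. Since $\Phi(1-\tilde\alpha)$ is bounded whenever $\tilde\alpha$ is bounded away from $1$, Slutsky's lemma gives $L_{0,\tilde\alpha} = \hat\tau_0 - \Phi(1-\tilde\alpha)\,\widehat\var(\hat\tau_0)^{1/2} \stackrel{p}{\to} \tau_0^*$. Combining this with $\hat\tau \stackrel{p}{\to} \tau$ and the continuity of $(a,b) \mapsto \min\{1,\max(a,b)\}$, the continuous mapping theorem yields $\hat\tau_{bnd} \stackrel{p}{\to} \min\{1,\max(\tau,\tau_0^*)\} = \min\{1,\tau\} = \tau$, the last two equalities using $\tau_0^* \le \tau$ and $\tau \le 1$ respectively.

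For part (ii) I would bound the non-coverage probability by decomposing $\{\tau \notin \text{CI}_{bnd,\alpha}\}$ into three one-sided events and applying a Bonferroni bound --- the device behind the validity of such intersection-type constructions, which conveniently sidesteps any need for the joint limiting law of $(\hat\tau,\hat\tau_0)$. Because $\tau \le 1$, the upper failure $\{\tau > \min\{1,U_{\alpha/2}\}\}$ coincides with $\{\tau > U_{\alpha/2}\}$, whose probability tends to at most $\alpha/2$ by the asymptotic normality of $\hat\tau$. The lower failure is $\{\tau < L_{\alpha/2-\alpha_0}\} \cup \{\tau < L_{0,\alpha_0}\}$; the first set has limiting probability at most $\alpha/2-\alpha_0$ (this is where $\alpha_0 \le \alpha/2$ enters, keeping the quantile level nonnegative), again by asymptotic normality of $\hat\tau$. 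For the second, writing $\hat\tau_0 = \tau_0^* + \widehat\var(\hat\tau_0)^{1/2} W_n$ with $W_n \stackrel{d}{\to} N(0,1)$, one has $\{\tau < L_{0,\alpha_0}\} = \{\widehat\var(\hat\tau_0)^{1/2}(W_n - \Phi(1-\alpha_0)) > \tau - \tau_0^*\}$, whose probability goes to $0$ when $\tau_0^* < \tau$ strictly and to $\Pr{W_n > \Phi(1-\alpha_0)} \to \alpha_0$ in the boundary case $\tau_0^* = \tau$; hence $\limsup_n \Pr{\tau < L_{0,\alpha_0}} \le \alpha_0$. Summing the three limiting bounds, $\limsup_n \Pr{\tau \notin \text{CI}_{bnd,\alpha}} \le (\alpha/2-\alpha_0) + \alpha_0 + \alpha/2 = \alpha$, which is the claim.

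The step I expect to need the most care is the control of $\Pr{\tau < L_{0,\alpha_0}}$: the lower endpoint of the interval borrows a one-sided confidence bound from the \emph{asymptotically biased} naive estimator, and the argument that this does not inflate non-coverage past $\alpha_0$ hinges entirely on the one-sided bias $\tau_0^* \le \tau$ supplied by $\eta \le \tau$, with a genuine case split according to whether that inequality is strict. The remaining ingredients --- the continuous-mapping argument of part (i), the reduction of the $\min\{1,\cdot\}$ cap via $\tau \le 1$, and the Bonferroni combination --- are routine once the limit laws and variance estimators for $\hat\tau$ and $\hat\tau_0$ are taken as given.
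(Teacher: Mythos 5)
Your proof is correct and takes essentially the same route as the paper's: part (i) via the continuous mapping theorem applied to $\min\{1,\max(\cdot,\cdot)\}$ using the one-sided asymptotic bias $s\tau+(1-s)\eta\le\tau$ of the naive estimator and the logical cap $\tau\le 1$, and part (ii) by combining one-sided intervals at levels $\alpha/2-\alpha_0$, $\alpha_0$, and $\alpha/2$ --- your explicit Bonferroni bound is exactly the argument underlying the paper's appeal to the validity of intersections of confidence sets. Your treatment is in fact slightly more careful than the paper's in handling the boundary case $\eta=\tau$ (where the naive limit equals $\tau$ and the event $\{\tau<L_{0,\alpha_0}\}$ has limiting probability exactly $\alpha_0$ rather than $0$), which the paper glosses over by writing $\tau'<\tau$.
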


\begin{proof}
	The proof of Proposition \ref{prop:bnd_estimator} can be found in Appendix \ref{app:prop2} of the Supplemental Web Materials.
\end{proof}

Because the naive estimator has much smaller variance than the MFD estimator, one will likely choose $\alpha_0$ and $\tilde\alpha$ to be much smaller than $\alpha$.  In general, it makes sense to choose $\tilde\alpha=\alpha_0$ to ensure that $\hat\tau_{bnd} \in \text{CI}_{bnd,\alpha}$. 

The lower bound used to construct $\hat\tau_{bnd}$ acts like a high probability, stochastic lower bound to $\tau$ while $1$ is an exact upper bound. When the Mendelian gene is weak, the denominator in $\hat\tau$ can sometimes be very close to zero, leading to unreliable estimates of vaccine efficacy. The bounded estimator is designed to mitigate the effect of these cases while keeping the median unbiasedness of $\hat\tau$ intact. Figure \ref{fig:bnd_dens}, illustrates how this plays out in a the setting where the $\hat\tau$ performs poorly with $n=2000$, a weak Mendelian factor, low specificity, and $\tau = 0.5$. The figure shows the estimated densities and means of the naive estimator (blue, dash), MFD estimator (gray, dot), and bounded estimator (green, dot-dash) over 5000 simulations. The true vaccine efficacy is indicated by the black solid line. As expected, $\hat\tau$ is much more variable and less biased than $\hat\tau_0$, but still materially biased in this setting. The bounded estimator $\hat\tau_{bnd}$ is nearly mean unbiased. You can see how it achieves this by ``clumping" unreliable MFD estimates near the stochastic lower bound and the exact upper bound.

\begin{figure}[ht]
\centering
\includegraphics[scale=0.475]{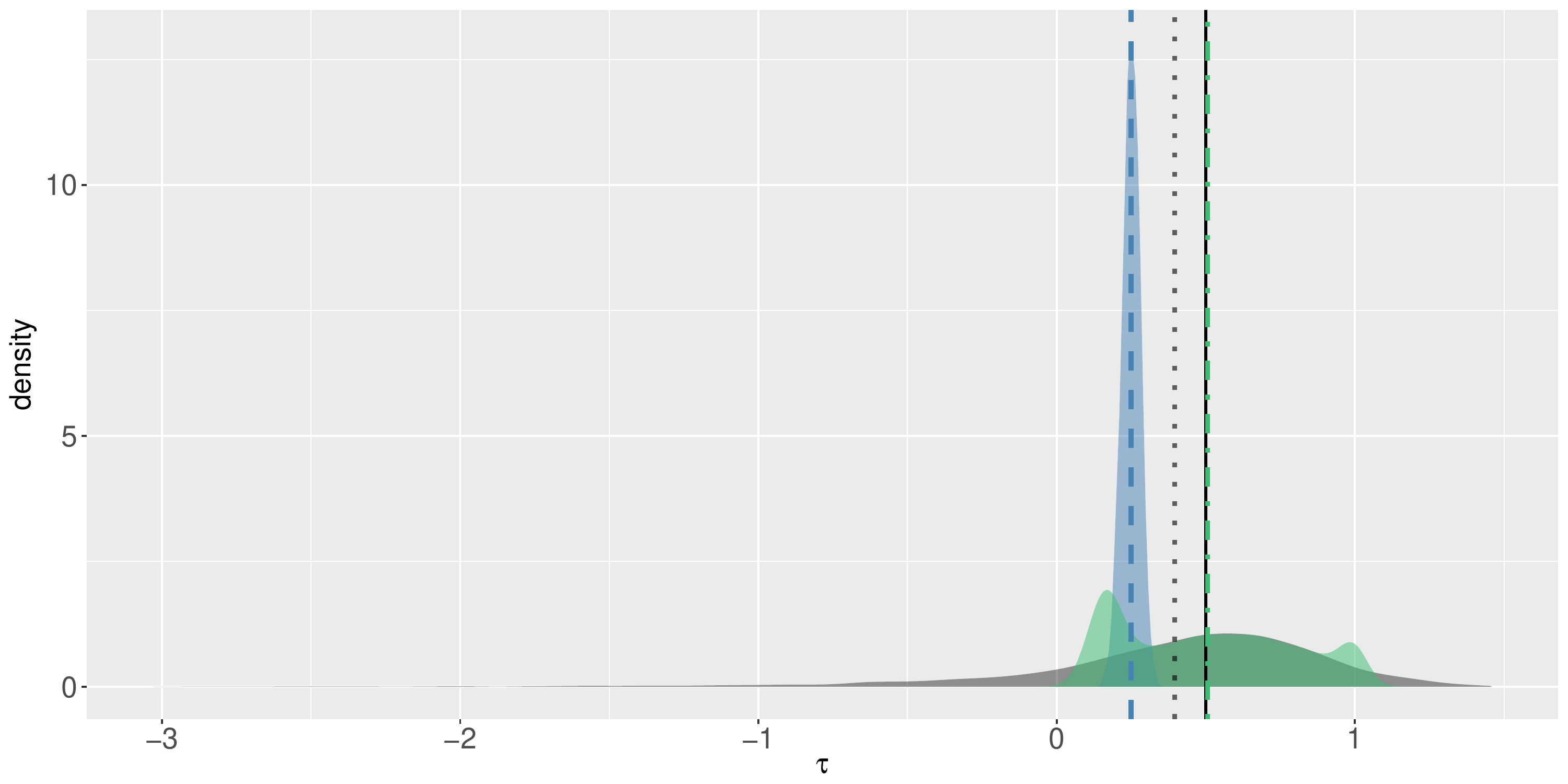}
\caption[Comparing densities and means of $\hat\tau_0$, $\hat\tau$, and $\hat\tau_{bnd}$ across 5000 simulations.]{Densities and means of naive estimator (blue, dash) MFD estimator (gray, dot), and bounded estimator (green, dot-dash);  True vaccine efficacy (black, solid). For the setting with $n=1000(\times 2)$, $\tau = 0.5$, $\nu=0.3$, $\text{spec.}=0.5$, and $\tilde\alpha = 0.001$}\label{fig:bnd_dens}
\end{figure}

We also assess how well $\hat\tau_{bnd}$ performs in the same setting as in Figure \ref{fig:bnd_dens} but with an even smaller sample size ($n=1000$). Using $\alpha=0.05$, $\alpha_0 = 0.001$, and $\tilde\alpha=0.001$, we find that $\hat\tau_{bnd}$ has substantially improved bias and RMSE while $\text{CI}_{bnd,\alpha}$ has very favorable power while maintaining the correct (conservative) coverage. In Figure \ref{fig:bias_rmse_bnd}, we compare the absolute proportional bias (left panel) and RMSE (right panel) of the three estimators.  Absolute proportional bias and RMSE values larger than 1 are not shown. The bounded estimator uniformly outperforms both the MFD and naive estimators in terms of bias and has performance comparable to that of the MFD estimator estimated on a sample five times as large ($n=5000$) for $\tau = 0.5,0.6,\text{ and }0.7$. The RMSE of $\hat\tau_{bnd}$ shows large improvements over the $\hat\tau$ and is comparable to the RMSE of $\hat\tau$ estimated in the same setting on a sample size of $n=5000$ for all values of $\tau$ considered. 

\begin{figure}[ht!]
\centering
\begin{tabular}{cc}
\includegraphics[scale=0.5]{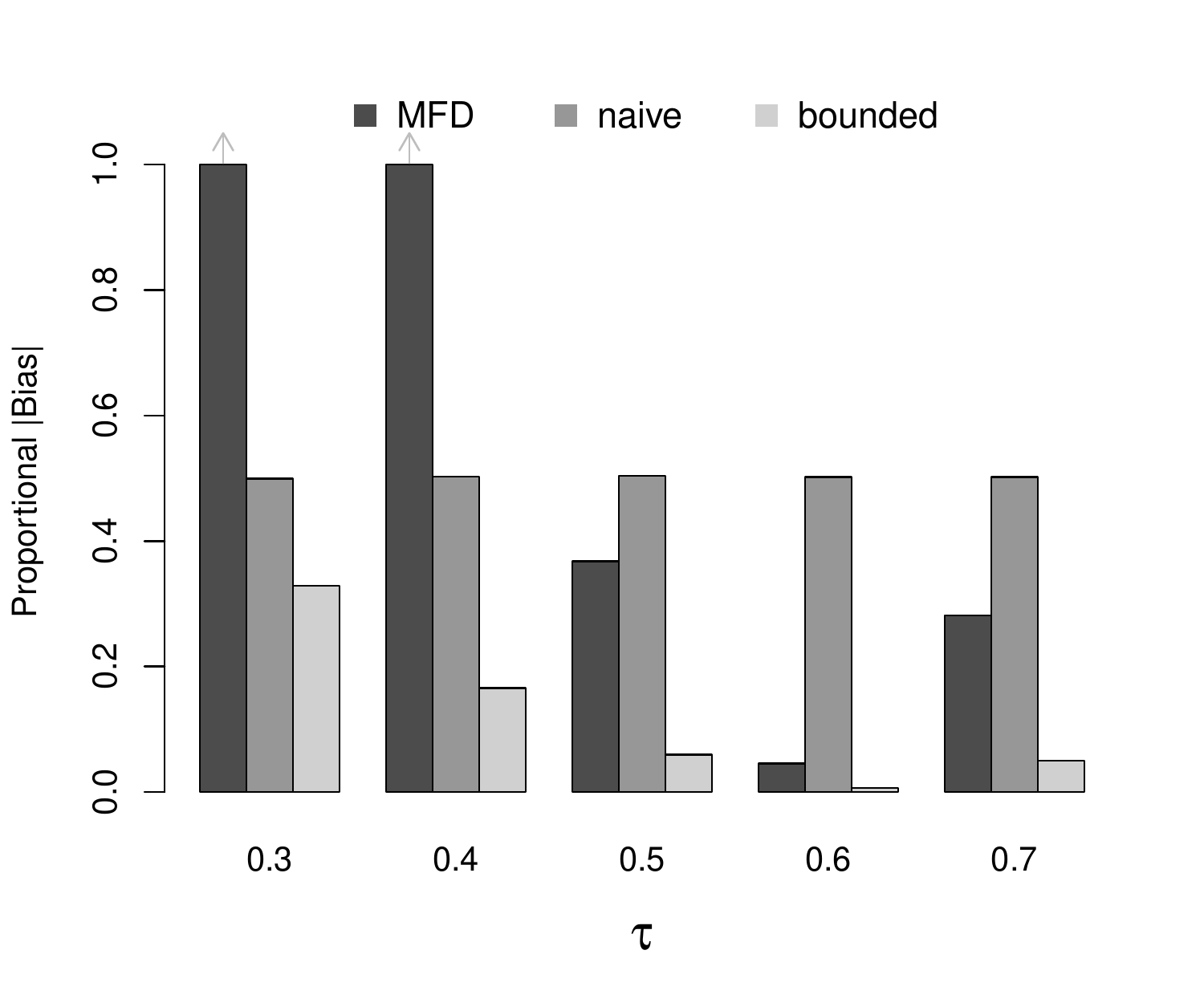} & \includegraphics[scale=0.5]{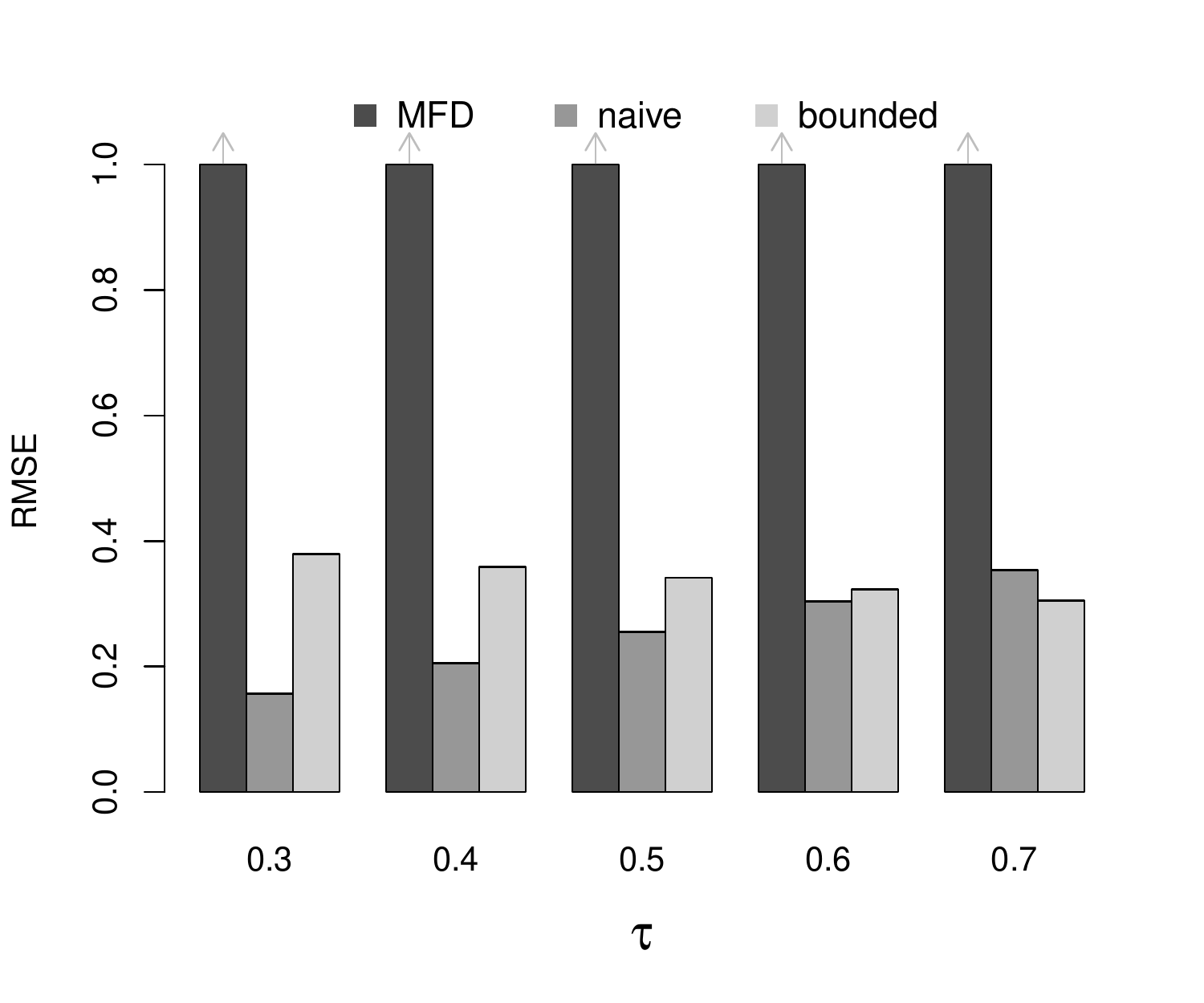} \\
\end{tabular}\caption[Absolute proportional bias and RMSE for MFD, naive, and bounded estimators.]{Absolute proportional bias (left panel) and RMSE (right panel) for MFD, naive, and bounded estimators with sample size $n=1000$ and $N_{sim} = 5000$ simulations. For bias and RMSE values above $1$, only a maximum of $1$ is shown. The actual absolute proportional bias values for the MFD estimator are $2.21$ and $1.56$ for $\tau = 0.3$ and $0.4$, respectively. The actual RMSE values for the MFD estimator from left to right are $18.58,67.90,32.29,8.42,\text{ and }6.75$.}\label{fig:bias_rmse_bnd}
\end{figure}

The power and coverage properties of $\text{CI}_{bnd,\alpha}$ give the clearest picture of how the complementary strengths of the MFD and naive procedures are retained by the bounded procedure. In the left panel of Figure \ref{fig:pow_cov_bnd}, we see that $\text{CI}_{bnd,0.05}$ has only marginally more conservative coverage than the MFD confidence interval $[L_{0.025},U_{0.025}]$ (right panel) while retaining the favorable power properties of the naive confidence interval $[L_{0,0.025},U_{0,0.025}]$ (left panel). The dashed line in the right panel indicates $95\%$ coverage.

\begin{figure}[ht!]
\centering
\begin{tabular}{cc}
\includegraphics[scale=0.5]{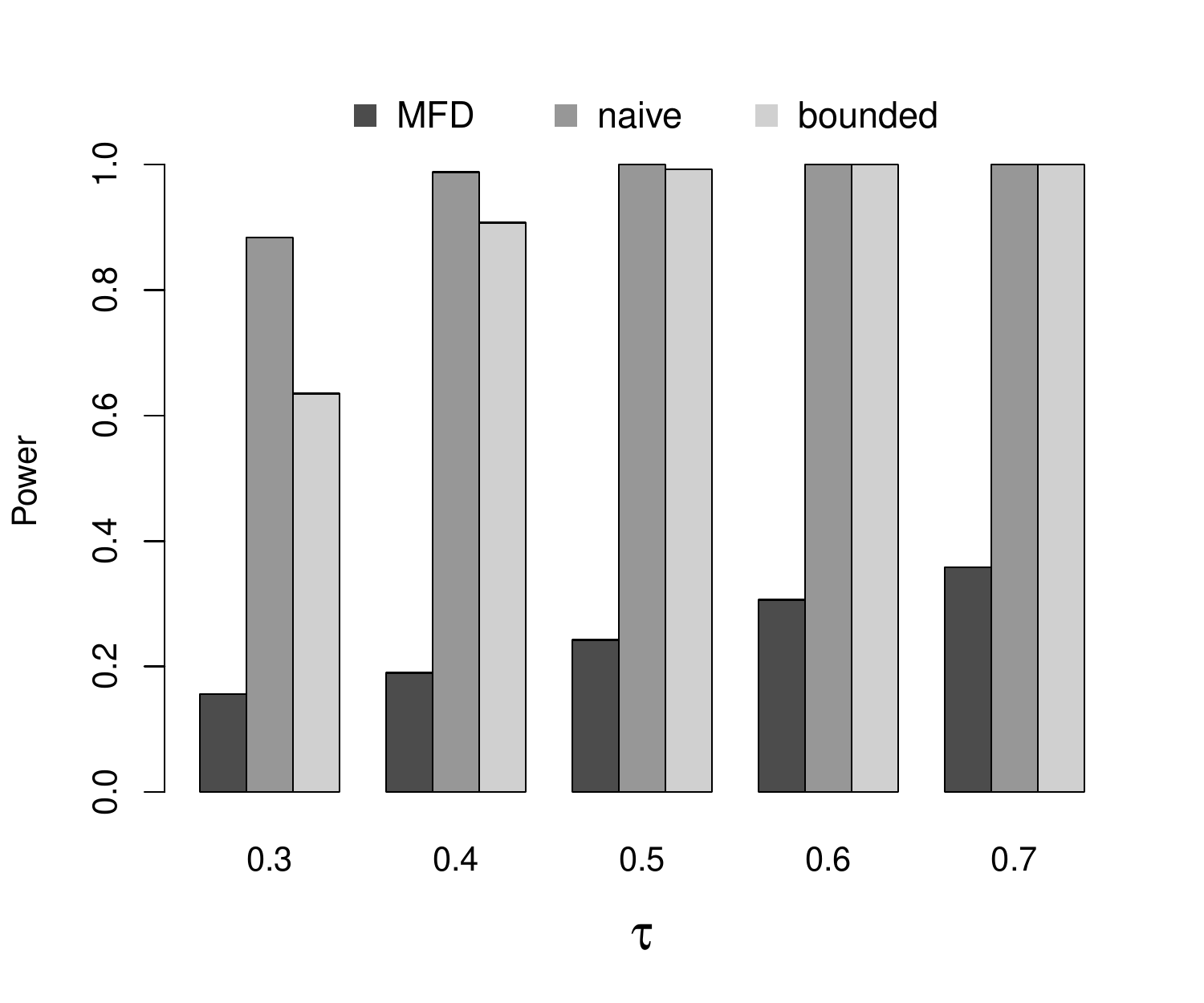} & \includegraphics[scale=0.5]{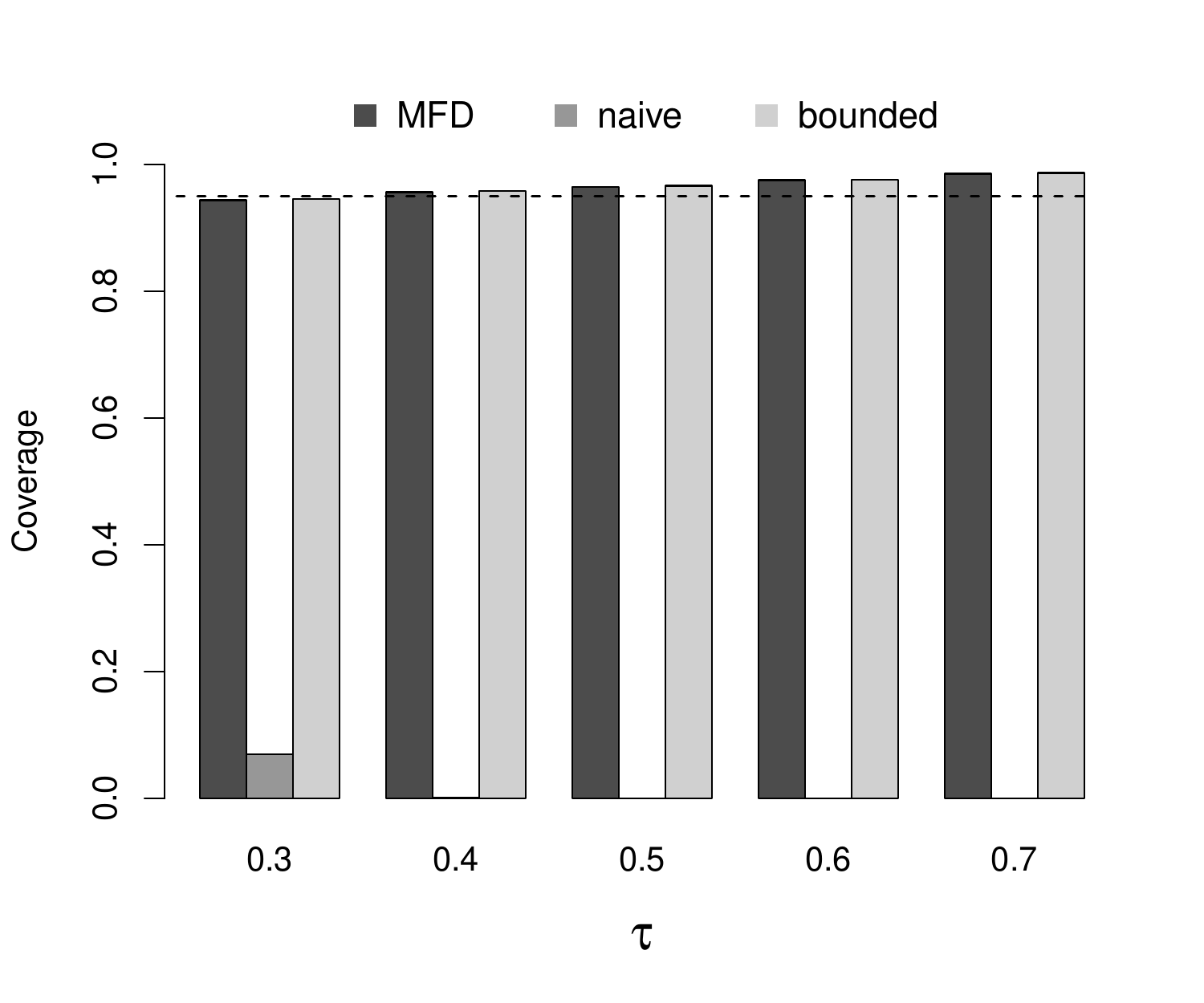} \\
\end{tabular}\caption[Power against two-sided alternative and coverage for MFD, naive, and bounded estimators]{Power against two-sided alternative at $5\%$ significance (left panel) and coverage of $95\%$ confidence interval (right panel) for MFD, naive, and bounded estimators with sample size $n=1000$ and $N_{sim} = 5000$ simulations. The dashed black line in the right panel indicates $95\%$ coverage.}\label{fig:pow_cov_bnd}
\end{figure}

Importantly, the superior performance of the bounded estimator in this particular setting does not appear to come at the expense of performance in the settings where $\hat\tau$ generally does well, improving on the $\hat\tau$ in almost all settings investigated in the simulation study.  

\section{Discussion} 

The strategy that we've developed for identification and estimation of malaria vaccine efficacy does not rely on the explicit definition of an inexact, but observable case definition. In short, we propose separating the approach to identifying VE into two distinct steps: first, define a gold-standard case that may be unobservable and then, identify VE using a strategy that doesn't require overt observation of these gold-standard cases. Our gold-standard case definition is one that is 100\% specific and sensitive. It is precisely defined using a potential outcome framework. As we've noted, these cases are not distinguishable from observed data alone. Regardless, we demonstrate that with Mendelian factorial design we are able to leverage genetic variation in an analogous fashion to Mendelian randomization studies to identify VE with respect to this exact, but unobservable, case definition. 

In observational studies, {\it evidence factors} are defined as several approximately independent tests of the same hypothesis, each of which depend on different assumptions about bias from non-random treatment assignment \citep{rosenbaum2011some}. In the presence of bias, these tests can be thought of as independent pieces of evidence in that the violation of assumptions underlying one test does not imply the other tests are similarly biased. Like an evidence factor, the MFD estimate can provide an additional piece of evidence in vaccine efficacy studies that relies on a different set of assumptions than the methods currently in use. For instance, the naive estimator assumes all fevers with any parasitemia are malaria-attributable, corresponding to a commonly used secondary case definition. Although not independent like true evidence factors, we demonstrated that the naive estimator and the MFD estimator can be combined to construct a {\it bounded} estimator that outperforms both estimators when used on their own. In particular, the bounded estimator provides significant improvements when the Mendelian gene is only weakly protective, a challenging setting similar to the weak instrument setting in IV studies. 

There is evidence that HbAS is to moderately protective against uncomplicated malaria and highly protective against severe malaria illness suggesting that MFD may be particularly useful for estimating VE against severe malaria, whose symptoms are not specific and overlap significantly with symptoms of other severe childhood comorbidities \citep{bejon2007defining}.

The performance of the combined estimator across a range of simulation settings with sample sizes that are similar to those found in phase II and III clinical trials is promising.  The results suggests that, if feasible, it would be prudent to begin collecting subject data on inherited hemoglobinopathies and other genetic traits that provide protection against clinical malaria, such as the sickle cell trait. Estimating the efficacy of prevention strategies that target transmission directly, such as a malaria transmission-blocking vaccine \citep{wu2015development} and insecticide-treated bed nets \citep{TER_KUILE_2003}, are feasible under the MFD framework developed in this paper. The assumption of no interference will likely fail but weakening Assumption \ref{assump:nointerfere} to allow for partial interference \citep{hudgens2008toward}, e.g., interference within but not between villages or sites, is plausible. Cluster randomized trial designs would allow for identification of natural definitions of treatment efficacy that are functions of the fraction of subjects who receive treatment \citep{athey2018exact}. The study of erythrocytic vaccines in the MFD framework may be more challenging, as the assumption that protective hemoglobinopathies and these blood-stage vaccines don't interact is tenuous at best. With more than thirty vaccines currently under development, both pre-erythrocytic and those targeting different stages of the disease cycle, the methods described in this paper have the potential to improve the reliability of vaccine efficacy estimates for a number of forthcoming trials \citep{mahmoudi2017efficacy}.

Many interesting research directions related to Mendelian factorial design remain. We mention a few in closing. Developing MFD methods using only aggregate site-level data on HbAS prevalence is one such direction. This might allow for MFD-based meta-analyses of past trial results in which hemoglobinopathy data were not collected if, (1) accurate prevalence data could be collected ex post and (2) the prevalence varied sufficiently between studies. There is evidence that there are a number of other genetic traits that confer protection against malaria \citep{ndila2018human}. When there are several potential valid Mendelian factors, we may again find inspiration from MR and other IV methods. It would be beneficial to develop falsification tests of the validity of potential Mendelian factors akin to tests for over-identifying restrictions such as the Sargan-Hansen test in IV regression \citep{hansen1982large}. As it has been demonstrated in MR, using many ostensible Mendelian factors where some are invalid may have the potential to improve the robustness of MFD analysis \citep{kang2016instrumental}. Finally, the application of MFD to the study of vaccines against other childhood diseases that have unspecific symptoms and no gold-standard case definition, such as pediatric tuberculosis \citep{beneri2016understanding}, could prove fruitful if plausible mendelian factors can be identified.
\newpage
\begin{center}
{\Large\bf Supplemental Web Materials for \textit{Estimating Malaria Vaccine Efficacy in the Absence of a Gold Standard Case Definition: Mendelian Factorial Design}}
\end{center}
%
%
%
%
%
\vspace{0.5cm}
\renewcommand{\thesection}{\Alph{section}}
\setcounter{section}{0} 

\section{Time-to-First Malaria Fever: Mendelian Factorial Design Under The Proportional Hazards Assumption}\label{sec:ttffev}
The World Health Organization (WHO) recommends that the primary endpoint in pivotal Phase III trials assessing the efficacy of malaria vaccines be the time-to-first malaria fever and that the efficacy be measured as one minus the hazard ratio returned by a Cox proportional hazard regression \citep{MOORTHY20075115}. In this section we use our potential outcome framework to precisely define vaccine efficacy in time-to-first malaria fever studies and show how it can be identified using MFD and estimated using data on time-to-first fever of any cause.
\subsection{Parameter Identification and Estimation under Modeling Assumptions}\label{subsec:identificationPH}
In this section we can think about our clinical outcomes as stochastic processes indexed by $t$, $Y \coloneqq \{Y_t\}_{t\ge 0}$. This allows us to consider quantities like the time to first fever, $T = \min\{t:\,Y_t = 1\}$.

If we follow the WHO recommendation above in the context of the potential outcome framework developed in \S \ref{subsec:PotOut}, a natural quantity to study in a time-to-event analysis of vaccine efficacy is the hazard rate of malaria-attributable fevers, i.e., the instantaneous risk of developing a malaria-attributable fever at time $t$ conditional on being free of malaria-attributable fevers up to time $t$. The risk set used in this hazard function includes individuals who have experienced non-malaria fevers prior to time $t$. This is analogous to a {\it subdistribution hazard} in the competing risks literature where non-malaria fevers can be viewed as a competing risk \citep{fine1999proportional}. Unfortunately, when $Y^m$ is not observable, vaccine efficacy based on a hazard ratio using the definition of malaria-attributable fever in \eqref{eq:Ydecomp} is not identifiable using MFD. This is due to the fact that on the hazard ratio scale, Assumption \ref{assump:add} (additivity) does not hold when $Y^m$ and $Y^{nm}$ are dependent -- recall, if $Y_t^{nm}$ equals $1$ then $Y_t^{m}$ must equal $0$ but is otherwise free to take values in $\{0,1\}$. Only under independence (or independence conditional on $X$) can we additively decompose the hazard of developing a fever of any cause into the hazard of developing a malaria-attributable fever and the hazard of developing a non-malaria fever. That said, under a few additional assumptions we can still identify vaccine efficacy in terms of the hazard rate for malaria-attributable fevers in the {\it absence of non-malaria infections}. This hazard is analogous to a cause-specific hazard in the competing risks literature when competing risks can be considered conditionally independent \citep{hsu2017statistical}.

\subsubsection{Malaria-attributable Fevers in the Absence of Competing Infections}

If we assume that there is no combined effect of malaria and non-malaria infections on fevers (Assumption 2(iii) of \cite{lee2018estimating}), then we can provide an alternative decomposition of $Y$ to the decomposition presented in \eqref{eq:Ydecomp}:

\begin{equation}
  Y(D(z,g),U(z)) = Y(D(z,g),0) \vee Y(0,U(z))\,,
\end{equation}

where $Y(D(z,g),0)$ are malaria-attributable fevers in the absence of non-malaria infections, $Y(0,U(z))$ are non-malaria fevers in the absence of malaria infections, and $\vee$ is a pairwise maximum. For stochastic processes, $A\vee B = \{\max(A_t,B_t)\}_{t\ge 0}$. We will refer to $Y(D(z,g),0)$  and $Y(0,U(z))$ as {\it isolated} malaria fevers and {\it isolated} non-malaria fevers. If $X$ sufficiently captures shared risk factors for malaria and non-malaria infections, then it is plausible that $D(z,g) \indep U(z) \;|\; X$ and thus $Y(D(z,g),0) \indep Y(0,U(z)) \;|\; X$ for all $z,g$. 

One might wonder if isolated malaria fevers are the endpoint of greatest interest. Perhaps malaria-attributable fevers as defined in \eqref{eq:Ydecomp} are more representative of the real-world burden of malaria infections. Regardless, one could argue that the standard case definition of clinical malaria, e.g., $Y=1$ and $D>2500$ parasites per $\mu l$, is an approximation of an isolated malaria fever. This argument has two parts: (1) the fixed threshold suggests that the standard case definition does not consider the possibility of combined effects of malaria and non-malaria fevers; and (2) the standard case definition does not consider whether a non-malaria fever would have occurred had there been no malaria infection. The approximation is rough, however, because while the definition of isolated malaria fever allows for individual-specific pyrogenic thresholds, the standard case definition does not.

\subsubsection{Identifying Vaccine Efficacy}

We can now define the potential time to first isolated malaria fever as $T^m(z,g) = \min\{t:\,Y_t(D(z,g),0)=1\}$ and the potential time to first isolated non-malaria fever as $T^{nm}(z,g) = \min\{t:\,Y_t(0,U(z))=1\}$. We can write the conditional hazard functions for $T^m(z,g)$ and $T^{nm}(z,g)$ as 

\begin{equation}\label{eq:obsHaz}
  \lambda_{zg}^{k}(t\,|\,X) = \lim_{\Delta\to 0_+}\Pr{T^k(z,g)\in[t,t+\Delta\,)|\,T^k(z,g)\ge t,\,X}/\Delta
\end{equation}
for $k=m,nm$ and all $z,g$. We drop the superscript $k$ to indicate the conditional hazard function for the a fever of any cause and define $$T(z,g) = \min\{t:\,Y_t(z,g)=1\} =\min\{T^m,T^{nm}\}\,.$$ 
\begin{assumption}[Proportional Hazards]\label{assump:prop}
  $\lambda_{zg}^k(t\,|\,X)$, $k=m,nm$ follow a Cox proportional hazard model with baseline hazard functions $\lambda^k(t)$, $k=m,nm$. That is,
  \begin{enumerate}[label=(\roman*)]
    \item $\lambda_{zg}^m(t\,|\,X) = \lambda^m(t)\exp\{\log \kappa + \log (1-\tau) z + \log (1-\nu) g + \boldsymbol{\beta}_m^TX\}\,,$ and 
    \item $\lambda_{zg}^{nm}(t\,|\,X) = \lambda^{nm}(t)\exp\{\log \phi + \log (1-\eta) z + \boldsymbol{\beta}_{nm}^TX\}$
  \end{enumerate}
  where $\nu > 0$.
\end{assumption}
In equation (i) above, vaccine efficacy $\tau$ is equal to one minus the hazard ratio of isolated malaria fever under vaccination versus placebo. In (ii), $\eta$ can again be thought of as a ``spillover efficacy'' term. Note that Assumptions \ref{assump:nointeract} and \ref{assump:valid} are satisfied under these modeling assumptions.

\begin{assumption}[Conditionally Independent First Fever Processes]\label{assump:indepFev}
The time to first isolated malaria fever and time to first isolated non-malaria fever are conditionally independent. That is,
  \[T^m(z,g) \indep T^{nm}(z) \;|\; X\quad \text{for}\; (z,g) \in \{0,1\}^2\]
\end{assumption}

Because $T^m(z,g)$ depends only on $Y(D(z,g),0)$ and $T^{nm}(z,g)$ depends only on $Y(0,U(z))$, Assumption \ref{assump:indepFev} is satisfied when $Y(D(z,g),0) \indep Y(0,U(z)) \;|\; X$ for all $z,g$. We previously argued that this condition is plausible when $X$ sufficiently describes shared risk factors for malaria and non-malaria infections.

\begin{assumption}[Shared Conditional Baseline Hazard Function]\label{assump:shared}
  Time-to-first isolated malaria fever and time to first isolated non-malaria fever have a shared baseline hazard function conditional on $X$. That is, $\lambda(t) \coloneqq \lambda^m(t) = \lambda^{nm}(t)$  for all $t\ge 0$ and $\boldsymbol{\beta} \coloneqq \boldsymbol{\beta}_m = \boldsymbol{\beta}_{nm}$.
\end{assumption}

Under these additional assumptions, we can use an MFD strategy to identify $\tau$ in time-to-first fever studies.

\begin{proposition}[Identification of $\tau$]\label{prop:PHidentify}
  Suppose that Assumptions \ref{assump:prop} - \ref{assump:shared} hold. Then the hazard function for $T(z,g)$ can be written as
  \begin{equation}
    \lambda_{zg}(t\,|\,X) = \lambda(t)\exp\{\alpha + \omega z + \gamma g + \lambda z\times g + \boldsymbol{\beta}^TX\}\label{eq:cox_model}\,.
  \end{equation}
  Furthermore, under Assumptions \ref{assump:nointerfere} and \ref{assump:random}, $\tau$ is identified from the observed data $\mathbf{O}$ as
  \begin{equation}\label{eq:cox_tau}
    \tau = 1 - \frac{\exp\{\omega + \gamma + \lambda\}-\exp\{\omega\}}{\exp\{\gamma\}-1}\,.
  \end{equation}
\end{proposition}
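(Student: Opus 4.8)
The plan is to prove Part 1 by computing the conditional hazard of the minimum of the two latent times and substituting the modeling assumptions, and then to prove Part 2 by inverting the resulting saturated log-linear parametrization. First I would note that $T(z,g) = \min\{T^m(z,g),T^{nm}(z,g)\}$, so that for any $x$ the conditional survivor function is $\Pr{T(z,g)>t\mid X=x}=\Pr{T^m(z,g)>t,\,T^{nm}(z,g)>t\mid X=x}$, which by Assumption \ref{assump:indepFev} factors as $\Pr{T^m(z,g)>t\mid X=x}\Pr{T^{nm}(z,g)>t\mid X=x}$. Taking $-\tfrac{d}{dt}\log$ of this product shows that the conditional hazard of $T(z,g)$ given $X$ equals $\lambda_{zg}^m(t\mid X)+\lambda_{zg}^{nm}(t\mid X)$, where $\lambda_{zg}^k$ is as defined in \eqref{eq:obsHaz}. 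Substituting the Cox forms of Assumption \ref{assump:prop} and then imposing Assumption \ref{assump:shared} ($\lambda^m(t)=\lambda^{nm}(t)=\lambda(t)$ and $\boldsymbol{\beta}_m=\boldsymbol{\beta}_{nm}=\boldsymbol{\beta}$) gives
\[
\lambda_{zg}(t\mid X)=\lambda(t)\,e^{\boldsymbol{\beta}^T X}\bigl[\kappa(1-\tau)^z(1-\nu)^g+\phi(1-\eta)^z\bigr].
\]

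Since $z$ and $g$ each range over $\{0,1\}$, the bracketed term is a strictly positive function on the four-point set $\{0,1\}^2$, hence can be written without loss of generality in the saturated log-linear form $\exp\{\alpha+\omega z+\gamma g+\lambda z g\}$ (here $\lambda$ denotes the $z\times g$ coefficient, as in the statement, distinct from the baseline hazard $\lambda(\cdot)$). This establishes \eqref{eq:cox_model}, and evaluating at $(0,0),(1,0),(0,1),(1,1)$ reads off the four identities $e^{\alpha}=\kappa+\phi$, $e^{\alpha+\omega}=\kappa(1-\tau)+\phi(1-\eta)$, $e^{\alpha+\gamma}=\kappa(1-\nu)+\phi$, and $e^{\alpha+\omega+\gamma+\lambda}=\kappa(1-\tau)(1-\nu)+\phi(1-\eta)$, which I will use in Part 2.

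For the identification claim I would invoke Assumption \ref{assump:random}, which gives $(Z,G)\indep\bigl(T^m(z,g),T^{nm}(z,g),X\bigr)$ and hence $(Z,G)\indep\bigl(T^m(z,g),T^{nm}(z,g)\bigr)\mid X$, together with the consistency consequence of Assumption \ref{assump:nointerfere}, namely $T_{ij}=T_{ij}(Z_{ij},G_{ij})$. Combining these, the conditional law of the observed $T$ given $(X,Z=z,G=g)$ coincides with the conditional law of $T(z,g)$ given $X$, so \eqref{eq:cox_model} is an identified Cox proportional hazards model for the observed data, with $\omega,\gamma,\lambda$ functionals of $\mathbf{O}$ (estimable, e.g., by partial likelihood for a PH model of $T$ on $X$, $Z$, $G$, and $Z\times G$). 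Plugging the four identities into the right-hand side of \eqref{eq:cox_tau} and multiplying numerator and denominator by $e^{\alpha}=\kappa+\phi$, the numerator becomes $\kappa(1-\tau)(1-\nu)+\phi(1-\eta)-\bigl[\kappa(1-\tau)+\phi(1-\eta)\bigr]=-\kappa(1-\tau)\nu$ and the denominator becomes $\kappa(1-\nu)+\phi-(\kappa+\phi)=-\kappa\nu$; their ratio is $1-\tau$ because $\kappa>0$ and $\nu>0$ by Assumption \ref{assump:prop}, which yields \eqref{eq:cox_tau}.

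The computations here are essentially routine; the one genuinely delicate step is the hazard-of-the-minimum identity. The subtlety is that the $\lambda_{zg}^k(t\mid X)$ in \eqref{eq:obsHaz} must be read as the hazards of the \emph{marginal} latent times $T^k(z,g)$ given $X$ (not as cause-specific hazards in the competing-risks sense), so that conditional independence is precisely what lets the joint conditional survivor function factor into a product; the additive decomposition of the observed-fever hazard then follows, and this is exactly where Assumption \ref{assump:indepFev} is used rather than a weaker competing-risks condition. A minor bookkeeping point, worth flagging in the write-up, is the overloading of the symbol $\lambda$ for both the shared baseline hazard and the interaction coefficient in \eqref{eq:cox_model}.
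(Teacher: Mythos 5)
Your proposal is correct and follows essentially the same route as the paper: conditional independence gives additivity of the hazards of the latent first-fever times (you via $-\tfrac{d}{dt}\log$ of the factored survivor function, the paper via differentiating the joint survivor function and manipulating densities — equivalent), then the saturated log-linear reparametrization, the same system of four equations, and the same differencing/ratio computation yielding $1-\tau$; your version of the $(z,g)=(0,1)$ equation, $e^{\alpha+\gamma}=\kappa(1-\nu)+\phi$, is in fact the correct one (the paper's displayed equation omits the $+\phi$, evidently a typo, since its subsequent differencing step requires it).
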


\begin{proof}
  The proof of Proposition \ref{prop:PHidentify} can be found in Appendix \ref{app:prop3}
\end{proof}

\begin{remark}
  The conditions of Proposition \ref{prop:PHidentify} lead to an additivity property of the conditional hazard functions that is analogous to Assumption \ref{assump:add}. This additivity can be seen in the first equality of \eqref{eq:factorialPH}.
\end{remark}

\begin{remark}
  A weaker version of the randomization / ``as-if'' randomized condition imposed by Assumption \ref{assump:random} would suffice to identify $\tau$. Namely, that $Z,G$ are independent of all potential outcomes conditional on $X$.
\end{remark}

\subsubsection{Estimation and Inference via the Delta Method}\label{subsubsec:estdeltaPH}

The parameters in \eqref{eq:cox_model} can be estimated consistently via maximum partial likelihood estimation \citep{cox1975partial} with the \texttt{R} function \texttt{coxph} implemented in the \texttt{survival} package. These estimates $(\hat\omega,\hat\gamma,\hat\lambda)$ are asymptotically 
normal with the limiting distribution
\begin{equation}\label{eq:cox_normal}
	\sqrt{n}\left(\begin{bmatrix}
	\hat\omega\\ 
	\hat\gamma\\
	\hat\lambda 
	\end{bmatrix} - \begin{bmatrix}
	\omega\\ 
	\gamma\\
	\lambda 
	\end{bmatrix}\right) \;\stackrel{\mathcal{D}}{\longrightarrow}\; \cN\left(\begin{bmatrix}
	0\\
	0\\
	0
	\end{bmatrix},\,\text{I}([\omega,\gamma,\lambda^T])^{-1}\right)\,,
\end{equation}
where $\text{I}([\omega,\gamma,\lambda^T])^{-1}$ is the inverse of the partial likelihood-based information matrix. Applying the continuous mapping theorem to \eqref{eq:cox_tau} yields the following consistent MFD estimator of $\tau$,
\begin{equation}\label{eq:cox_est}
	 \hat\tau = 1 - \frac{\exp\{\hat\omega + \hat\gamma + \hat\lambda\}-\exp\{\hat\omega\}}{\exp\{\hat\gamma\}-1}\,.
\end{equation}
Applying the delta method to \eqref{eq:cox_normal} and \eqref{eq:cox_est} and noting that $\hat{\text{I}}([\hat\omega,\hat\gamma,\hat\lambda^T])^{-1}/n$, the average sample information, is consistent for $\text{I}([\omega,\gamma,\lambda^T])^{-1}$ gives us an approximate distribution for $\hat\tau$ for large enough samples,

\begin{equation}\label{eq:cox_inf}
	\hat\tau \;\dot\sim\; \cN\left(\tau, \boldsymbol{\partial\hat\tau}^T\, \hat{\text{I}}([\hat\omega,\hat\gamma,\hat\lambda^T])^{-1}\, \boldsymbol{\partial\hat\tau} \right)\,,
\end{equation}
where $\boldsymbol{\partial\hat\tau} = [\partial\hat\tau/\partial\hat\omega,\partial\hat\tau/\partial\hat\gamma,\partial\hat\tau/\partial\hat\lambda]^T$. We can use this approximate distribution to conduct inference on $\tau$. 

An analogous bounded estimator and confidence interval to those described in Proposition \ref{prop:bnd_estimator} can be constructed using the naive estimator $1-\exp\{\hat\omega\}$. It can easily be shown that this estimator is consistent for a convex combination of $\tau$ and $\eta$ and is thus is a consistent estimator for a lower bound of $\tau$ as long as $\eta \le \tau$. 

\subsection{Causal Identification: Selection Bias and Frailty}

Even if these strong modeling assumption assumptions hold, $\tau$ defined in \S \ref{subsec:identificationPH} does not have a straightforward causal interpretation. Several authors have discussed in depth the subtleties and common misconceptions of interpreting hazard ratios as a causal quantity \citep{hernan2010hazards,aalen2015does,martinussen2018subtleties}. We observe one of these subtleties when examining the definition of $\tau$ given in Assumption \ref{assump:prop}(i),
\begin{align}
  \tau & = 1 - \frac{\lim_{\Delta\to 0_+}\Pr{T^m(1,g)\in[t,t+\Delta\,)|\,T^m(1,g)\ge t,\,X}}{\lim_{\Delta\to 0_+}\Pr{T^m(0,g)\in[t,t+\Delta\,)|\,T^m(0,g)\ge t,\,X}}\,.\label{eq:hrCausal}
\end{align}
These authors point out that the hazard functions in the numerator and the denominator of \eqref{eq:hrCausal} condition on different sets of subjects for $t > t_{(1)} = \min_{ij,g,z}T^m_{ij}(g,z)$. The causal contrast is thus some combination of treatment efficacy and a selection effect. Initially, randomized assignment of $Z$ and ``as-if" randomization of $G$ ensure that the units in each combination of trial arm and HbAS status are comparable on average. However, conditioning on the study subjects at risk of their first isolated malaria fever at time $t > t_{(1)}$, i.e., $T^m(z,g) \ge t$, has the potential to introduce selection bias. For instance, suppose that baseline susceptibility to malaria is heterogeneous even after conditioning on $X$. If $Z$ provides protection against developing an isolated malaria fever, we might find that the children with high susceptibility to malaria in the control arm are more likely to develop fevers than similar children in the treatment arm. Consequently, as time passes, the children at risk in the control arm will be less susceptible on average to fever than those in the treatment arm, absent treatment. The comparability of the subjects in each combination of $z,g \in \{0,1\}^2$ ensured by Assumption \ref{assump:random} at the beginning of the study is not guaranteed as the follow-up progresses. 

  When malaria-attributable fever is rare or when the follow-up time is short, the selection effect may be less consequential \citep{aalen2015does}. If $X$ does not sufficiently capture heterogenous susceptibility to isolated malaria fever, modeling subject-level frailty can further alleviate the selection bias of estimates of $\tau$ \citep{Wienke_2010}. Frailty can be introduced as a multiplicative random effect, 
  \begin{align}
    \lambda_{zg}^m(t\,|\,X) & = \lambda(t)W\exp\{\log\alpha + \log(1-\tau)z+\log(1-\nu)g + \boldsymbol{\beta}^TX\}\,,\notag \\
    \lambda_{zg}^{nm}(t\,|\,X) & = \lambda(t)W\exp\{\log\phi + \log(1-\eta)z + \boldsymbol{\beta}^TX\}\label{eq:frailty}
  \end{align}
where $W$ is a random, subject-level frailty shared by both malaria and non-malaria fever hazard functions. Proposition \ref{prop:PHidentify} extends immediately to the frailty model implied by \eqref{eq:frailty} where, in addition to $X$, the hazard ratio is conditional on $W$. Estimation can be carried out via the Expectation-Maximization algorithm \citep{klein1992} or penalized partial maximum likelihood methods \citep{therneau2003penalized}.

Alternatively, frailty models can be used to conduct a sensitivity analysis of Cox regression-based estimates of $\tau$ to selection bias \citep{Stensrud_2017}. In general, however, modeling frailty usually requires parametric assumptions about $W$, for example, that it comes from the family of power variance function distributions \citep{Wienke_2010}.

Although $\tau$ itself has a subtle and potentially awkward causal interpretation, simple functions of $\tau$ have been shown to have more natural causal interpretations. For example, $1/(2-\tau)$ is the {\it probabilistic index}, which is defined as the probability that $T^m(1,g)$ for one individual is longer than $T^m(0,g)$ for another individual with comparable baseline covariates (and frailty) \citep{De_Neve_2019}. 

Importantly, all the causal interpretations discussed in this section require the correct identification of the parameter $\tau$ established in Proposition \ref{prop:PHidentify}. 

\section{Technical Appendices}

\subsection{Proof of Proposition \ref{prop:identify}}\label{app:prop1}
In this Appendix we give a proof of the general identification result in Proposition \ref{prop:identify}.

\begin{proof}[Proof of Proposition \ref{prop:identify}]
  By consistency and Assumption \ref{assump:nointerfere} we have that 
\begin{equation}
  \Exp{f(Y)|\,X,Z=z,G=g}{P} = \Exp{f(Y(z,g))|\,X,Z=z,G=g}{P}
\end{equation}
  for all $z,g$. Assumption \ref{assump:random} ensures that $P_{X|Z,G}=P_{X}$ and $P_{Y(z,g)|Z,G}=P_{Y(z,g)}$. Hence,
  \begin{align}
    \Exp{\Exp{f\{Y(z,g)\}|\,X,Z=z,G=g}{P}}{X} & = \Exp{\Exp{f\{Y(z,g)\}|\,X,Z=z,G=g}{P}}{X|Z=z,G=g} \notag\\
    & = \Exp{f\{Y(z,g)\}|\,Z=z,G=g}{P} \notag\\
    & = \Exp{f\{Y(z,g)\}}{P} \notag\\
    & = \mu_{zg}(P)\,.
  \end{align}
  Applying Assumption \ref{assump:add}, the right hand side of \eqref{eq:identify} becomes
\begin{equation}\label{eq:ratio1} 
  1 - \frac{(\mu_{11}^m(P)-\mu_{10}^m(P)) + (\mu_{11}^{nm}(P)-\mu_{10}^{nm}(P))}{(\mu^m_{01}(P)-\mu^m_{00}(P)) + (\mu^{nm}_{01}(P)-\mu^{nm}_{00}(P))}\,.
\end{equation}
  Because $G$ is a valid Mendelian factor (Assumption \ref{assump:valid}(ii)) we have that $\mu^{nm}_{z1}(P)-\mu^{nm}_{z0}(P)=0$ for $z=0,1$, further simplifying \eqref{eq:ratio1} to 
  \begin{equation*}
    1 - \frac{(\mu_{11}^m(P)-\mu_{10}^m(P))}{(\mu^m_{01}(P)-\mu^m_{00}(P))}\,.
  \end{equation*}
  Finally, we have that 
  \begin{align*}
    1 - \frac{(\mu_{11}^m(P)-\mu_{10}^m(P))}{(\mu^m_{01}(P)-\mu^m_{00}(P))} & = 1 - \frac{-\nu\mu_{10}^m(P)}{-\nu\mu_{00}^m(P)}\quad\text{by Assumption \ref{assump:nointeract} and Definition \ref{defn:efficacy}(ii)}\\
      & = 1 - \frac{\mu_{10}^m(P)}{\mu_{00}^m(P)}\\
      & = \tau \qquad\qquad\qquad\;\;\, \text{by Definition \ref{defn:efficacy}(i)}\,.
  \end{align*}
  Assumption \ref{assump:valid}(i) ensures that the right hand side of the first equality is well-defined.
\end{proof}

\subsection{Proof of Proposition \ref{prop:bnd_estimator}}\label{app:prop3}
We give a short proof of the asymptotic unbiasedness of $\hat\tau_{bnd}$ and the asymptotical validity of $\text{CI}_{bnd,\alpha}$ as a $1-\alpha$ confidence interval for $\tau$.
\begin{proof}[Proof of Proposition \ref{prop:bnd_estimator}]
We give a proof for when the sample sizes are balanced across sites. When this is the case, we skip step 2 of Algorithm \ref{alg:est} and use $\hat\mu_0$ to construct $\hat\tau_0$. Treating $G$ as an element of $X$, the consistency of $\hat\tau_0$ for $s\tau + (1-s)\eta$ and its asymptotic linearity follows almost immediately from Theorem 1 of \cite{Rosenblum_2010}. When $\eta < \tau$, the theorem implies that $\hat\tau_0$ is consistent for some $\tau' < \tau$. Because $\tau \le 1$ by definition, the continuous mapping theorem implies that $\hat\tau_{bnd}$ is consistent for $\min[1,\max\{\tau',\tau\}]= \tau$. The lower bound of $\text{CI}_{bnd,\alpha}$ is constructed by taking the intersection of a lower one-sided confidence interval for $\tau$ with coverage $1-\alpha/2+\alpha_0$ and a lower one-sided confidence interval for $\tau'$ with coverage $1-\alpha_0$. Because $\tau'<\tau$, this second confidence interval is also asymptotically valid for $\tau$. If $\alpha_0$ is chosen a priori, then the intersection is an asymptotically valid $1-\alpha/2$ lower one-sided interval for $\tau$ \citep{neuwald1994detecting}. The upper confidence bound is constructed by taking the intersection of a $1-\alpha/2$ upper one-sided confidence interval for $\tau$ and $(-\infty,1]$. Since $\tau\le1$ the resulting interval is an asymptotically valid $1-\alpha/2$ upper confidence-interval for $\tau$. The intersection of the resultant upper and lower $1-\alpha/2$ one-sided confidence intervals yields an asymptotically valid two-sided confidence interval with coverage $1-\alpha$.
\end{proof}

\subsection{Proof of Proposition \ref{prop:PHidentify}}\label{app:prop2}
In this Appendix we provide a proof of the parameter identification result for the hazard ratio under the proportional hazards assumption.

\begin{proof}[Proof of Proposition \ref{prop:PHidentify}]
  Let $\Lambda_{zg}(t\,|\,X) = \int_{-\infty}^t \lambda_{zg}(t\,|\,X)\,dt$ be the cumulative hazard function for $Y$. Define $\Lambda_{zg}^k(t\,|\,X)$, $k=m,nm$ similarly. The survival function $S_{zg}(t\,|\,X = \Pr{T(z,g) > t\,|\,X}$ can be expressed as $S_{zg}(t\,|\,X) = \exp\{-\Lambda_{zg}(t\,|\,X)\}$ and the probability density of $Y$ as $f_{zg}(t\,|\,X) = \lambda_{zg}(t\,|\,X)S_{zg}(t\,|\,X)$. We can express $f_{zg}(t\,|\,X)$ as 
  \begin{align} 
    f_{zg}(t\,|\,X) & = \frac{\partial}{\partial\,t}F_{zg}(t\,|\,X) \notag\\
    & = \frac{\partial}{\partial\,t}\left\{1 - S_{zg}^{nm}(t\,|\,X)S_{zg}^m(t\,|\,X)\right\}\quad\text{by Assumption \ref{assump:indepFev}}\notag\\
    & = f_{zg}^m(t\,|\,X)S_{zg}^{nm}(t\,|\,X) + f_{zg}^{nm}(t\,|\,X)S_{zg}^m(t\,|\,X) \notag\\
    & = \{\lambda_{zg}^m(t\,|\,X) + \lambda_{zg}^{nm}(t\,|\,X)\}S_{zg}^{nm}(t\,|\,X)S_{zg}^m(t\,|\,X) \notag\\
    & = \{\lambda_{zg}^m(t\,|\,X) + \lambda_{zg}^{nm}(t\,|\,X)\}\times  \exp\left\{-\int_{-\infty}^t \lambda_{zg}^m(t\,|\,X) + \lambda_{zg}^{nm}(t\,|\,X)\,dt\right\}\,.\label{eq:hazY}
  \end{align}
  From \eqref{eq:hazY} and Assumptions \ref{assump:prop} and \ref{assump:shared}, we have that the hazard function for $Y$ is 
  \begin{align}
    \lambda_{zg}(t\,|\,X) & = \lambda_{zg}^m(t\,|\,X) + \lambda_{zg}^{nm}(t\,|\,X)\notag\\
    & = \lambda(t)\exp\{\boldsymbol{\beta}^TX\}\left(\kappa(1-\tau)^z(1-\nu)^g + (1-\eta)^z\right)\notag\\
    & = \lambda(t)\exp\{\alpha + \omega z + \gamma g + \lambda z\times g + \boldsymbol{\beta}^TX\}\,,\label{eq:factorialPH}
  \end{align}
  Proving the first part of the proposition. To prove the second part of the proposition we first observe that the last equality above in \eqref{eq:factorialPH} is simply a reparameterization -- when $z$ and $g$ are binary, $\kappa(1-\tau)^z(1-\nu)^g + \phi(1-\eta)^z$ and $\exp\{\alpha + \omega z + \gamma g + \lambda z\times g\}$ can each take four distinct values. The reparameterization yields a system of four equations,
  \begin{align}
    \exp\{\alpha\} & = \kappa + \phi\label{eq:eq00}\\
    \exp\{\alpha+\omega\} & = \kappa(1-\tau) + \phi(1-\eta) \label{eq:eq10}\\
    \exp\{\alpha+\gamma\} & = \kappa(1-\nu) \label{eq:eq01}\\
    \exp\{\alpha+\omega+\gamma+\lambda\} & = \kappa(1-\tau)(1-\nu) + \phi(1-\eta)\label{eq:eq11}\,.
  \end{align}
  Subtracting \eqref{eq:eq10} from \eqref{eq:eq11} and \eqref{eq:eq00} from \eqref{eq:eq01}, then dividing the former by the latter yields
  \begin{equation}
    1-\tau = \frac{\exp\{\omega + \gamma + \lambda\}-\exp\{\omega\}}{\exp\{\gamma\}-1}\label{eq:tau_identify}\,.
  \end{equation}
  Under Assumptions \ref{assump:nointerfere} and \ref{assump:random} we have that $\lambda_{zg}(t\,|\,X)$ can be identified by the observed data $\mathbf{O}$:
  \begin{align} 
    \lambda_{zg}(t\,|\,X) & = \lim_{\Delta \to 0_+}\Pr{T(z,g)\in[t,t+\Delta)|\,T(z,g)\ge t,\, X}/\Delta \notag\\
    & = \lim_{\Delta \to 0_+}\Pr{T\in[t,t+\Delta)|\,T\ge t,\, X,\,Z=z,\,G=g}/\Delta \label{eq:PHid_fromData}\,.
  \end{align}
  Applying \eqref{eq:factorialPH}, we have that $\alpha,\omega,\gamma,\lambda$ are identifiable and thus $\tau$ can be identified by the observed data using \eqref{eq:tau_identify}.
\end{proof}


\subsection{Proof (Sketch) of Proposition \ref{prop:tmle}}\label{app:tmle}

In this section we sketch a proof of Proposition \ref{prop:tmle} by showing that $\mu_{zg}(P_n)$ is consistent for $\mu_{zg}(P)$ and asymptotically linear, that is,
\begin{equation}
  \mu_{zg}(P_n) - \mu_{zg}(P) = (\mathbb{P}_n - \mathbb{P})\varphi^*_{zg}(P_\infty) + o_P\left(1/\sqrt{N}\right)\,.\label{eq:ALresult}
\end{equation}  

for all $z,g$. Theorem 1 of \cite{Rosenblum_2010} checks a number of technical conditions to verify that Theorem 1 of \cite{van_der_Laan_2006} can be applied. A particularly important condition is that the $\mu_{zg}(P)$ are linear. This does not hold in the setting when the prevalence of $G$ in each site is not known a priori. Hence, a straightforward application of the theorem is not appropriate. However, if certain weaker conditions hold, then the important parts of Theorem 1 of \cite{van_der_Laan_2006} still apply. We first show that we can write
\begin{equation}
  \mu_{zg}(P_n) - \mu_{zg}(P) = (\mathbb{P}_n - \mathbb{P})\varphi^*_{zg}(P_n) \,.\label{eq:bigO_al}
\end{equation}  

If $\varphi^*_{zg}(P_n)$ is Donsker then by the second part of Theorem 1 of \cite{van_der_Laan_2006}, we have that $\mu_{zg}(P_n)$ is $\sqrt{n}$-consistent for $\mu_{zg}(P)$. Then, if $\mathbb{P}\{\varphi^*_{zg}{P_n} - \varphi^*_{zg}(P_\infty)\}^2 \to 0$ in probability as $n\to\infty$ the third part of Theorem 1 of \cite{van_der_Laan_2006} implies that $\mu_{zg}(P_n)$ is asymptotically linear with form \eqref{eq:ALresult}. Consistency of $\mu_{zg}(P_n)$ and Assumptions \ref{assump:add} - \ref{assump:valid} imply that the plugin estimator $\hat\tau$ is consistent for $\tau$. Finally, we apply the delta method to derive the asymptotic variance $\sigma^2$ of $\sqrt{n}(\hat\tau-\tau)$. It follows from the last part of Theorem 1 of \cite{van_der_Laan_2006} that $\sigma^2$ achieves the semiparametric efficiency bound if the working model for the conditional expectation of $f(Y)$ is correctly specified. In what follows, when taking expectations of functionals of $P_n$ we treat these functionals as fixed. \\

\noindent{\it Verifying  \eqref{eq:bigO_al}:}

Given the choice of the terms included in the linear part of the GLM estimate $\hat\mu_1$ in Algorithm \ref{alg:est}, the score equations that $\hat\mu_1$ solve imply that
\begin{equation}
	\mathbb{P}_n\left\{ \frac{\mathbbm{1}(Z=z)\mathbbm{1}(G=g)}{p(Z=z)p_{j,n}(G=g)}\left(f(Y) - \hat\mu_1(z,g,X)\right)\right\} = 0\;\text{ for all }\; z,g \in \{0,1\}^2\,.\label{eq:asymlin}
\end{equation}
The choice of the weighting $\texttt{w}=n/I_j$  when estimating $\hat\mu_1$ is required since $\mathbb{P}_n$ does not equally weight observations unless the size of each site is the same. Also note, that by definition, $\mathbb{P}_n\hat\mu_1(z,g,X) =\mu_{zg}(P_n)$. Taken together, we have that $\mathbb{P}_n\varphi_{zg}(P_n) = 0$ for all $z,g\in\{0,1\}^2$. By iterated expectations, we have that
\begin{equation}
	\mathbb{P}\varphi^*_{zg}(P_n) = \mathbb{P}\varphi_{zg}(P_n) - \mathbb{P}\left\{\Exp{\varphi_{zg}(P_n)\,|\,G}{P}\right\} =  \mathbb{P}\varphi_{zg}(P_n) - \mathbb{P}\varphi_{zg}(P_n) = 0\,.
\end{equation}

All that remains to be shown is that $-\mathbb{P}_n\left\{\Exp{\varphi_{zg}(P_n)\,|\,G}{P}\right\} = \mu_{zg}(P_n) - \mu_{zg}(P)$. We start by evaluating $\Exp{\varphi_{zg}(P_n)\,|\,G}{P}$:
\begin{align}
	\Exp{\varphi_{zg}(P_n)\,|\,G}{P} & = \Exp{\frac{\mathbbm{1}(Z=z)\mathbbm{1}(G=g)\left\{f(Y) - \hat\mu_1(z,g,X)\right\}}{p_{j,n}(G=g)p(Z=z)}\,\Big{|}\,G}{P}\notag\\
	& \quad+ \Exp{\hat\mu_1(z,g,X)- \mu_{zg}(P_n)\,|\,G}{P} \notag\\
	& = \frac{\mathbbm{1}(G=g)}{p_{j,n}(G=g)}\left(\mu_{zg}(P) - \Exp{ \hat\mu_1(z,g,X)\,|\,G}{P}\right) \notag\\
	& \quad+ \Exp{\hat\mu_1(z,g,X)\,|\,G}{P} - \mu_{zg}(P_n)\,.
\end{align}
The second equality follows from an application of iterated expectations, Assumption \ref{assump:random}, and the fact that $\mu_{zg}(P_n)$ and $\hat\mu_1$ are treated as fixed when taking expectations. Taking the empirical expectation $\mathbb{P}_n$ we get
\begin{align}
	\mathbb{P}_n\left\{\Exp{\varphi_{zg}(P_n)\,|\,G}{P} \right\} & = \mathbb{P}_n\left\{\frac{\mathbbm{1}(G=g)}{p_{j,n}(G=g)}\left(\mu_{zg}(P) - \Exp{ \hat\mu_1(z,g,X)\,|\,G}{P}\right)\right\} \notag\\
	& \quad+ \mathbb{P}_n\left\{\Exp{\hat\mu_1(z,g,X)\,|\,G}{P} - \mu_{zg}(P_n)\right\}\notag\\
	& = \mu_{zg}(P) -  \Exp{ \hat\mu_1(z,g,X)\,|\,G=g}{P}  \notag\\
	&\quad+ \mathbb{P}_n\left\{\Exp{\hat\mu_1(z,g,X)\,|\,G}{P}\right\} - \mu_{zg}(P_n) \notag\\
	& =  \mu_{zg}(P) -  \Exp{ \hat\mu_1(z,g,X)}{P} +   \Exp{ \hat\mu_1(z,g,X)}{P} - \mu_{zg}(P_n)\notag\\
	& = -(\mu_{zg}(P_n) - \mu_{zg}(P))\,.
\end{align}
The second to last equality comes from the fact that $X\indep G$ by Assumption \ref{assump:random}. This is our desired result and \eqref{eq:bigO_al} now follows.\\

\noindent{\it Checking that $\varphi^*_{zg}(P_n)$ is Donsker:}

This is analogous to condition (iv) in the proof of Theorem 1 in \cite{Rosenblum_2010}. We make the same assumptions of boundedness on $\boldsymbol{\beta}$ and on the terms in the linear part of the generalized linear models in steps 1 and 2 of Algorithm \ref{alg:est}. Under the additional assumption that $0<\delta\le p_j(G=g)$ for all $g=0,1$ and $j = 1,\dots,J$, a nearly identical verification that $\varphi^*_{zg}(P_n)$ is Donsker follows. Hence, $\mu_{zg}(P_n)$ are $\sqrt{n}$-consistent for all $z,g\in\{0,1\}1^2$.\\

\noindent{\it Verifying that $\mathbb{P}\{\varphi^*_{zg}(P_n) - \varphi^*_{zg}(P_\infty)\}^2 \stackrel{P}{\to} 0$:}

The steps to verify that $\mathbb{P}\{\varphi^*_{zg}(P_n) - \varphi^*_{zg}(P_\infty)\}^2 \to 0$ in probability are very similar to the verification of condition (v) in \cite{Rosenblum_2010}. There are a few extra steps required to deal with the fact that we are estimating $p_j(G=g)$ with $p_{j,n}(G=g)$. We first note that we can write
\begin{align}
	\mathbb{P}\{\varphi^*_{zg}{P_n} - \varphi^*_{zg}(P_\infty)\}^2  & = \mathbb{P}[\{\varphi_{zg}(P_n) - \varphi_{zg}(P_\infty)\} - \Exp{ \varphi_{zg}(P_\infty)-\varphi_{zg}(P_n)\,|\,G}{P} ]^2 \notag\\
	& \le  2\mathbb{P}\{\varphi_{zg}(P_n) - \varphi_{zg}(P_\infty)\}^2 + 2\mathbb{P}\{\Exp{ \varphi_{zg}(P_\infty)-\varphi_{zg}(P_n)\,|\,G}{P} \}^2 \notag\\
	& \le  4\mathbb{P}\{\varphi_{zg}(P_n) - \varphi_{zg}(P_\infty)\}^2\,.\label{eq:bnds1}
\end{align}
The last line comes from applying Jenson's inequality to the inner expectation of the second term followed by an application of iterated expectations. We need to show that $\mathbb{P}\{\varphi_{zg}(P_n) - \varphi_{zg}(P_\infty)\}^2$ converges to $0$ in probability. In what follows we distinguish the working model using the fitted parameters $\boldsymbol{\beta}_n$ and the model using the unique maximizer of the expected log-likelihood $\boldsymbol{\beta}$ as $\hat\mu_{1,n}$ and $\hat\mu_{1,\infty}$, respectively. For brevity, we also let $\mathbbm{1}_{zg} = \mathbbm{1}(Z=z)\mathbbm{1}(G=g)$, $p(z) = p(Z=z)$, $p_{j}(g) = p_j(G=g)$, and $p_{j,n}(g) = p_{j,n}(G=g)$. We can write
\begin{align}
	&\mathbb{P}\{\varphi_{zg}(P_n) - \varphi_{zg}(P_\infty)\}^2 \notag\\
	& \qquad = \mathbb{P}\left\{ \frac{\mathbbm{1}_{zg}} {p(z)p_{j,n}(g)}\{f(Y) - \hat\mu_{1,n}(z,g,X)\} + \hat\mu_{1,n}(z,g,X) - \mu_{zg}(P_n)  \right.\notag\\
	& \qquad\qquad\left. -  \frac{\mathbbm{1}_{zg}} {p(z)p_{j}(g)}\{f(Y) - \hat\mu_{1,\infty}(z,g,X)\} - \hat\mu_{1,\infty}(z,g,X) + \mu_{zg}(P) \right\}^2 \notag\\
	&\qquad \le \frac{4}{p(z)^2}\mathbb{P}\left\{f(Y)^2\right\}\left(\frac{1}{p_{j,n}(g)}-\frac{1}{p_j(g)} \right)^2 + \frac{4}{p(z)^2}\mathbb{P}\left\{\frac{\hat\mu_{1,\infty}(z,g,X)}{p_{j}(g)}-\frac{\hat\mu_{1,n}(z,g,X)}{p_{j,n}(g)}\right\}^2\notag\\
	& \qquad\qquad + 4\mathbb{P}\left\{\hat\mu_{1,n}(z,g,X) - \hat\mu_{1,\infty}(z,g,X)\right\}^2 + 4\mathbb{P}\left\{\mu_{zg}(P)-\mu_{zg}(P_n)\right\}^2 \notag\\
	& \qquad \le C_1\left(\frac{1}{p_{j,n}(g)}-\frac{1}{p_j(g)} \right)^2 + \frac{4}{p(z)^2}\mathbb{P}\left\{\frac{\hat\mu_{1,\infty}(z,g,X)}{p_{j}(g)}-\frac{\hat\mu_{1,n}(z,g,X)}{p_{j,n}(g)}\right\}^2\notag\\
	& \qquad\qquad + C_2\pnorm{\boldsymbol{\beta}_0-\boldsymbol{\beta}}^2 + 4\{\mu_{zg}(P)-\mu_{zg}(P_n)\}^2 \,.\label{eq:bnds2}
\end{align}
The first equality follows immediately from definitions. The first {\it inequality} follows from rearranging terms and noting that $(x_1+\dots+x_k)^2 \le k(x_1^2+\dots+x_k^2)$ by Jensen's inequality. The first term after the second inequality comes from the boundedness of $Y$. Since $p_{j,n}(g) \stackrel{P}{\to} p_j(g)$ and we have assumed that $p_j(g)$ are bounded away from zero, the continuous mapping theorem implies that this term converges to $0$ in probability. The third term comes from the fact that our working model has uniformly bounded first derivatives. This is due to the boundedness assumptions on $X$ and the terms of the linear part of our working model. The conditions given in Proposition \ref{prop:tmle},  are sufficient for $\boldsymbol{\beta}_n$ to converge to $\boldsymbol{\beta}$ in probability (\cite{rosenblum2009using}, Appendix D). Consequently, the third term also converges to $0$ in probability. The fourth term disappears in probability since we proved earlier that $\mu_{zg}(P_n)$ is consistent for $\mu_{zg}(P)$. All that is left to handle is the second term. A little bit of rearranging gives us
\begin{align}
	 &\frac{4}{p(z)^2}\mathbb{P}\left\{\frac{\hat\mu_{1,\infty}(z,g,X)}{p_{j}(g)}-\frac{\hat\mu_{1,n}(z,g,X)}{p_{j,n}(g)}\right\}^2 \notag\\
	 &\qquad =   \frac{4}{p(z)^2}\mathbb{P}\left\{ \frac{\hat\mu_{1,\infty}(z,g,X)-\hat\mu_{1,n}(z,g,X)}{p_{j}(g)} +  \hat\mu_{1,n}(z,g,X)\left(\frac{1}{p_{j,n}(g)}-\frac{1}{p_j(g)}\right)     \right\}^2\notag\\
	 & \qquad \le \frac{8}{\{p(z)p_j(g)\}^2}\mathbb{P}\left\{\hat\mu_{1,\infty}(z,g,X)-\hat\mu_{1,n}(z,g,X)\right\}^2 \notag\\
	 &\qquad\qquad +  \frac{8}{p(z)^2}\mathbb{P}\left\{\hat\mu_{1,n}(z,g,X)^2\right\}\left(\frac{1}{p_{j,n}(g)}-\frac{1}{p_j(g)} \right)^2 \notag\\
	 & \qquad = C_3\pnorm{\boldsymbol{\beta}_0-\boldsymbol{\beta}}^2 + C_4\left(\frac{1}{p_{j,n}(g)}-\frac{1}{p_j(g)} \right)^2\,.\label{eq:bnds3}
\end{align}
The first inequality follows again from Jenson's inequality. The first term in the last line follows from the same arguments made earlier and the second term follows from the fact that $\mathbb{P}\left\{\hat\mu_{1,n}(z,g,X)^2\right\}$ is bounded. As we've already demonstrated, these two terms vanish in probability. Combining \eqref{eq:bnds1}, \eqref{eq:bnds2}, and \eqref{eq:bnds3} gives us that $\mathbb{P}\{\varphi^*_{zg}(P_n) - \varphi^*_{zg}(P_\infty)\}^2 \stackrel{P}{\to} 0$. \eqref{eq:asymlin} follows immediately and an application of Proposition \ref{prop:identify} implies that $\hat\tau$ is also asymptotically linear. All that remains is to compute its asymptotic variance.\\

\noindent{\it Asymptotic Variance:}

To compute the asymptotic variance of $\hat\tau$ we begin with a taylor expansion around $\tau$:
\begin{align} 
  \hat\tau & = 1 - \frac{\mu_1(P_n)}{\mu_0(P_n)} \notag \\
  & = 1 - \frac{\mu_1(P)}{\mu_0(P)} - \frac{1}{\mu_0(P)}\{\mu_{1}(P_n) - \mu_{1}(P)\} \notag \\
  & \qquad\quad + \frac{\mu_1(P)}{\mu_0(P)^2}\{\mu_{0}(P_n) - \mu_{0}(P)\} + o_p\left(1/\sqrt{n}\right)\notag \\
  & = \tau - \frac{1}{\mu_0(P)}(\mathbb{P}_n - \mathbb{P})\varphi^*_{1}(P_\infty)(O) + \frac{\mu_1(P)}{\mu_0(P)^2}(\mathbb{P}_n - \mathbb{P})\varphi^*_{0}(P_\infty)(O) + o_p(1/\sqrt{n})\label{eq:taylor}\,.
\end{align}
The $o_p(1/\sqrt{n})$ comes from the second order remainder term of the expansion. Using the expansion in \eqref{eq:taylor} along with the fact that $\varphi^*_{z}(P_\infty)(O) - \mathbb{P}\varphi^*_{z}(P_\infty)(O)$ are mean zero i.i.d. random variables we can write the scaled asymptotic variance of $\hat\tau$, $n\cdot\var(\hat\tau)$, as 
\begin{equation}
  \Exp{\frac{\mu_1(P)}{\mu_0(P)^2}(\varphi^*_{0}\{P_\infty)(O) - \mathbb{P}\varphi^*_{0}(P_\infty)(O)\} - \frac{1}{\mu_0(P)}\{\varphi^*_{1}(P_\infty)(O) - \mathbb{P}\varphi^*_{1}(P_\infty)(O)\} }{P}^2\,.
\end{equation} 

\subsection{Additional Simulation Study Details for TMLE-based Estimators}\label{app:simdetail}

Below are the simulation settings for the individual-specific vaccine and protective efficacies ($\tau_i$ and $\nu_i$), the baseline covariates and how they are transformed when they enter the condition mean of the distribution of fever counts ($X_i$, $\tilde X^m_i$, and $\tilde X^{nm}_i$), and unobserved heterogeneity in the distribution of fever counts between individuals ($\epsilon^m_i$ and $\epsilon^{nm}_i$).\\

\noindent{\it Generative Distributions:}
\begin{align}
  &(1-\nu_i) \sim (1-\nu)\exp\{-0.05^2/2\}\cdot \texttt{Lognormal}(0,0.05^2)\notag\\
  &(1-\tau_i) \sim (1-\tau)\exp\{-0.05^2/2\}\cdot \texttt{Lognormal}(0,0.05^2)\,,\notag\\
  &\tilde X^m_i \sim \exp\{0.05X_i - 0.05^2/2\}\,,\notag\\
  &\tilde X^{nm}_i \sim \exp\{0.075X_i - 0.075^2/2\}\,,\notag\\
  &X_i \sim \texttt{Normal}(0,1)\,,\notag\\
  &\epsilon^m_i \sim \exp\{-0.05^2/2\}\cdot\texttt{Lognormal}(0,0.05^2)\,,\notag\\
  &\epsilon^{nm}_i \sim \exp\{-0.05^2/2\}\cdot\texttt{Lognormal}(0,0.05^2)\,.\notag
\end{align}
We assume that the spillover efficacy $\eta$ is equal to zero. Based on empirical evidence and the negative dependence between malaria-attributable fevers and non-malaria fevers as defined in \eqref{eq:Ydecomp}, we simulate the number of malaria-attributable and non-malaria fevers from negatively dependent negative binomial distributions \citep{Olotu_2013}. We use a overdispersion parameter of $r = 10$\\

\noindent{\it Count of malaria-attributable Fevers per child-year:}

On the margin, the count of malaria-attributable fevers per child-year follows a negative binomial distribution with mean $\mu_{m,i}$ and variance $\sigma_{m,i}^2$,
\begin{equation}
  \1^T Y^m_i(z,g) \sim \texttt{NB}(\mu_{m,i},\sigma_{m,i}^2)
\end{equation}
where $\mu_{m,i} = \kappa\cdot(1-\nu_i)^g\cdot(1-\tau_i)^z\cdot\tilde X^m_i\cdot\epsilon^m_i$ and $\sigma_{m,i}^2 = \mu_{m,i}^2/r+\mu_{m,i}$.\\

\noindent{\it Count of non-malaria fevers per child-year:}

On the margin, the count of non-malaria fevers per child-year follows a negative binomial distribution with mean $\mu_{nm,i}$ and variance $\sigma_{nm,i}^2$,
\begin{equation}
  \1^T Y^{nm}_i(z,g) \sim \texttt{NB}(\mu_{nm,i},\sigma_{nm,i}^2)
\end{equation}
where $\mu_{nm,i} = \phi\cdot\tilde X^{nm}_i\cdot\epsilon^{nm}_i$ and $\sigma_{nm,i}^2 = \mu_{nm,i}^2/r+\mu_{nm,i}$.\\

\noindent{\it Count of fevers of any-cause per child-year:}

The joint distribution of the counts of non-malaria and malaria-attributable fevers per child-year are simulated using a Gaussian copula with a negative dependence parameter $\rho = -0.1$ \citep{genest2007primer}.\\

\noindent{\it Specificity and Mendelian Gene Prevalence Settings:}

We calibrated $\kappa$ and $\phi$ to achieve different levels of case specificity ($s = $ 0.5 and 0.8). For example, for a case specificity of 0.8 we set $\kappa$ and $\phi$ so that the expected number of malaria-attributable fevers is 80\% of the expected number of total fevers of any-cause. The prevalence of the Mendelian gene was set to 20\% based on existing estimates of HbAS prevalence in sub-Saharan Africa \citep{TER_KUILE_2003, elguero2015malaria}. \\

\noindent{\it Initial and Updated Working Model Estimates:}

For both the MFD and naive estimators, we used the \texttt{R} function \texttt{glm.fit} from the package \texttt{stats} to fit the initial estimator $\hat\mu_0$ in step 1 of Algorithm \ref{alg:est} using Poisson regression. The regression included main terms for $Z$, $G$, and $X$ as well as all interactions.

\bibliographystyle{chicago}
\bibliography{./biblio} 


\end{document}